\theoremstyle{thmstyleone}%
\newtheorem{theorem}{Theorem}
\newtheorem{proposition}[theorem]{Proposition}%
\theoremstyle{thmstyletwo}%
\newtheorem{remark}{Remark}%
\theoremstyle{thmstylethree}%
\newtheorem{definition}{Definition}%
\newtheorem*{hypothesis}{Hypothesis}%
\newtheorem{lemma}{Lemma}%
\newcommand{\E}{\mathbb{E}}
\newcommand{\F}{\mathbb{F}}
\newcommand{\G}{\mathbb{G}}
\newcommand{\p}{\mathbb{P}}
\newcommand{\R}{\mathbb{R}}
\newcommand{\ind}{\mathbf{1}}
\newcommand{\T}{T \wedge \tau}
\DeclareMathOperator*{\esssup}{ess\,sup}
\title{Delegated portfolio management with random default}
\author{Alberto {\sc GENNARO}\footnote{ Department of Industrial Engineering and Operations Research, UC Berkeley, USA.\\ alberto.gennaro@berkeley.edu}~~and Thibaut {\sc MASTROLIA} \footnote{ Department of Industrial Engineering and Operations Research, UC Berkeley, USA.\\ mastrolia@berkeley.edu},}
\begin{document}

\maketitle
\begin{abstract}

We consider the problem of optimal portfolio delegation between an investor and a portfolio manager under a random default time. We focus on a novel variation of the Principal-Agent problem adapted to this framework. We address the challenge of an uncertain investment horizon caused by an exogenous random default time, after which neither the agent nor the principal can access the market. This uncertainty introduces significant complexities in analyzing the problem, requiring distinct mathematical approaches for two cases: when the random default time falls within the initial time frame $[0, T]$ and when it extends beyond this period. We develop a theoretical framework to model the stochastic dynamics of the investment process, incorporating the random default time. We then analyze the portfolio manager's investment decisions and compensation mechanisms for both scenarios. In the first case, where the default time could be unbounded, we apply results from Backward Stochastic Differential Equations (BSDEs) and control theory to address the agent's problem. In the second case, where the default time is within the interval $[0, T]$, the problem becomes more intricate due to the degeneracy of the BSDE's driver. For both scenarios, we demonstrate that the contracting problem can be resolved by examining the existence of solutions to integro-partial Hamilton-Jacobi-Bellman (HJB) equations. We develop a deep-learning algorithm to solve the problem with no access to the optimizer of the Hamiltonian function by means of an actor--critic algorithm.  
\end{abstract}
\textbf{Keywords: }{Stochastic control with random horizon, Principal-Agent problem, enlargement of filtration, BSDE, HJB equation and deep learning.}

\section{Introduction}\label{sec1}

Delegating portfolio management from an investor to a professional fund manager is increasingly seen as a strategic move due to the growing complexity of financial markets and their fragmentation \cite{jensen1968performance,bernstein1998investment,stracca2006delegated}. The financial landscape today is marked by rapid market fluctuations, changing regulations, and an extensive range of investment opportunities, all of which necessitate not only significant time and effort but also in-depth knowledge and experience to navigate effectively. For many investors, handing over portfolio management to a professional allows them to leverage the fund manager’s expertise in areas such as professional oversight, diversification strategies, and sophisticated risk management techniques, capabilities that are often difficult to achieve independently.\\

In this context, fund managers are expected to deliver superior performance by utilizing their specialized skills and tools. Traditionally, the compensation structure for fund managers has been a blend of a fixed fee and a performance-based component. The fixed fee provides a stable income for the manager, while the performance-based component is designed to incentivize the manager to achieve better returns by aligning their interests with those of the investor. However, this conventional compensation structure raises an important question: Is it optimally designed to align the incentives of both the investor and the fund manager? The challenge lies in ensuring that the performance-based component effectively motivates the fund manager to act in the best interests of the investor, while also accounting for the inherent uncertainties and risks associated with market fluctuations.
To address this, it is crucial to evaluate whether the existing compensation structures are adequately aligned with the investor’s objectives and whether alternative models could offer better alignment. This involves exploring various compensation schemes and their impact on fund performance, risk management, and overall investor satisfaction. Ultimately, a well-designed compensation structure should incentivize fund managers not only to maximize returns but also to manage risks prudently, ensuring that both the investor’s and the manager’s goals are harmoniously aligned in the pursuit of financial success.\\

From a mathematical perspective, this issue can be framed as a variation of the principal-agent (PA) problem in wealth management, see, for example, \cite{dalmacio2004agency,ou2003optimal,stracca2006delegated,demarzo2006optimal,li2009incentive,leung2014continuous,cvitanic2017moral}. The PA framework in continuous time is a game-theoretical model designed to address problems of stochastic control, where one party (the principal) delegates decision-making authority to another party (the agent), whose actions are not directly observable and evolve in a stochastic environment. The agent controls the system by choosing a strategy that influences outcomes, but because of the information asymmetry, the principal cannot directly observe the agent's actions. Instead, the principal must design a contract based on the observed outcomes, which indirectly depend on the agent's actions. The goal is to structure this contract in a way that motivates the agent to exert optimal effort, while managing the inherent trade-offs between risk-sharing and incentives. Here, the investor (the principal) has an initial capital, $X_0 = x$, and seeks an agent to invest on their behalf. The principal is willing to negotiate a compensation scheme that incentivizes the agent based on portfolio performance and risks. In the literature, the PA problem was extensively studied in a simpler setting: in the seminal work of \cite{holmstrom1987aggregation}, the agent controls the drift of the process, but the utility is just drawn from the terminal value of the controlled process. In \cite{sannikov2008continuous} we still see a control based on just the drift, but the utility is now directly dependent on inter-temporal payments. In both cases, because of the modeling choice of having a single Brownian motion, there is no moral hazard with respect to the control on the volatility. A concept introduced by \cite{cvitanic2017moral}, moral hazard in the volatility control is important when multiple sources of risks arise in the control problem. What the authors proved in this work, thanks to the mathematical advancement on singular change of measure, is that in optimal contracts there are some incentives to be given with respect to the quadratic variation of the controlled process and the co-variation with respect to the risk factors. Furthermore, in their work, instead of working from a probability perspective, they cast the problem as a stochastic control problem, and they restrict the analysis to an admissible family of contracts, proving no loss of generality in doing so, paving a simpler way to the recent literature on the subject. In the subsequent paper \cite{cvitanic2018dynamic}, the same result was proven by using second-order BSDE, and a recent work \cite{chiusolo2024new} simplified even more the theoretical guarantees for volatility controlled problems, showing the same results but using the standard theory of BSDE. To summarize, in these works, they define a class of admissible contracts, they prove there is no loss of generality in considering such a form, and they find the optimal contract in this set. In practice, they rely on the work of \cite{pardoux1990adapted,barles1997backward,el1997backward,pardoux2005backward,kobylanski2000backward,papapantoleon2018existence} on the existence of solutions for BSDEs in the Lipschitz and quadratic cases with or without jumping terms and random horizon. This will also be the building block for this work.\\

Portfolio optimization has been extensively studied in the literature, beginning with the pioneering work of \cite{markowitz1991foundations}. More recent studies have explored continuous-time versions of the framework, as well as settings that incorporate jumps, as in \cite{morlais2009utility}. The PA problem captures the scenario of a fund manager investing on behalf of an investor. Despite its practical relevance, the existing literature has not fully addressed a crucial aspect of this problem: the randomness of the investment horizon. More often than not, an investment in financial markets does not have a precise duration, and duration is used as a reference for setting risk aversion for the investor.

Our contribution aims to fill this gap by examining how decision-making strategies for both the agent's investment and the principal's compensation scheme are affected by the introduction of default times, adding uncertainty to the investment horizon. Adding a default time makes the problem mathematically more challenging. First of all, using a general default time forces one to delve into the depths of information theory. To be able to treat the problem, we will need to enlarge the filtration, adding to the one generated by the financial market the filtration resulting from the random default (see \cite{jeanblanc2009mathematical,bielecki2013credit,aksamit2017enlargement,Guo_2008} and the references therein). Furthermore, as highlighted by \cite{jeanblanc2015utility}, there are two distinct cases to consider for default times, each with different implications:
\begin{itemize}
    \item \textbf{Unbounded Case}: If the maximum possible default time $S$ exceeds the investment horizon $T$ (or is infinite), it is uncertain whether a default will occur within the investment period. In this case, the investing problem is reduced to a utility maximization under random horizon. It has been solved in, for example, \cite{ kharroubi2013mean} which proves that the solution is related to a system of BSDEs with a jump admitting a solution via a decomposition approach coming from filtration enlargement theory. Without loss of generality, we can assume that $S = +\infty$, hence the name unbounded.
    \item \textbf{Bounded Case}: If $S$ is less than $T$, default will certainly occur before the investment horizon ends, but the exact timing is unknown. Without loss of generality, we can assume $S = T$: if the default happens almost surely before $T$, then we can simply rewrite the a priori horizon as $S$. It has been solved in, for example, \cite{jeanblanc2014note,jeanblanc2015utility} proving that the solution is related to a system of \textit{degenerate} BSDEs with a jump.
\end{itemize}
These two cases have not only different interpretations but also different mathematical tools required to solve them. 
We motivate our study with two contrasting examples inspired by
cryptocurrency trading and life–insurance portfolio management.
In highly speculative markets such as crypto assets, the investment
horizon is intrinsically uncertain: markets may collapse abruptly due to technological failures, regulatory bans, or market inefficiencies that effectively ``terminate'' the investment opportunity. Therefore, we are in a situation where the intended investment horizon can be achieved, but a black-swan event may cause the market to cease to exist.
The situation is completely different for life–insurance companies. Although the investment horizon of most products is in the order of $100$ years, the effective horizon remains random, as the contract expires at the death time of the insured. Furthermore, regulatory and actuarial considerations impose strict constraints on the risk profile of managed portfolios. Insurers are required to invest in a manner that does not deviate too aggressively from prescribed benchmark allocations, such as long-duration bond indices or liability-driven reference portfolios.  
A convenient way to encode such tracking discipline is to introduce a non–negative parameter $\varepsilon \ge 0$ that penalizes deviations between the insurer's strategy and some benchmark, for instance through a running quadratic tracking error.  
In this interpretation, the crypto setting corresponds to $\varepsilon=0$,
whereas the life-insurance application is characterized by $\varepsilon>0$, thus bridging two seemingly different investment problems within a unified principal-agent framework under random horizon.

In general, therefore, the unbounded case can be seen as a default caused by a black swan event \cite{pate2012black}, or a crash that forces authorities to close the market (the Flash crash in May 2010, see \cite{kirilenko2017flash}), or, in the blockchain, a hacker attack or, finally, in a more structured deal with investors, when a fund sees its money withdrawn with no or little notice. This often complicates the investment strategies for funds, and for this reason certain funds (i.e. hedge funds) have very strict policies on funding withdrawal. Mathematically, the BSDE related to this case is better behaved than the other case. The bounded case instead can be representative of the well-known and studied life-insurance market: in this case, the insurance policy can have a time horizon of over $100$ years, so that we can claim, with probability one, that the investor is going to pass away before the natural termination of the contract. This means that, when calibrating the contract, both the agent and the investor are aware of the fact that the horizon will not be reached, and this is taken into account in both the trading strategies and the insurance payments. Mathematically, this formulation introduces a difficulty in the family of proposed contracts, as it generates a BSDE with a singular driver (see \cite{jeanblanc2015utility} and references therein), and it also poses some difficulties in the convergence of numerical methods.\\

The problem's structure also presents challenges for numerical solutions. The partial differential equation (PDE) resulting from the Hamilton-Jacobi-Bellman (HJB) control problem has a varying coefficient dependent on the solution of a maximization problem involving the solution itself. Addressing this requires a specialized approach using an ``actor--critic''
 iterative algorithm, where the actor solves the PDE for a fixed coefficient, and the critic updates the maximization problem based on the actor's latest guess. While several schemes could tackle this iterative process, the most effective have proven to be in the domain of Physics-Informed Neural Networks (PINNs). PINNs are a powerful machine learning method that blends neural networks with principles from physics to solve complex differential equations, especially in situations where traditional methods may struggle. From a technical perspective, the surge of this methodology was possible because of one of the most useful but perhaps underused techniques in
scientific computing, automatic differentiation. We refer to the survey \cite{baydin2018automatic} for a comprehensive study of this problem. The simple idea behind it is to differentiate neural networks with respect to both their input coordinates and model parameters: the former allows us to have derivatives in the loss function, the latter is the standard way to train the network. Introduced and expanded by works such as \cite{raissi2017physics} and \cite{sirignano2018dgm}, PINNs leverage the underlying physical laws, typically encoded as partial differential equations (PDEs), to guide the learning process. Rather than relying purely on data, PINNs incorporate these governing equations into the loss function, ensuring that the neural network solutions respect known physical constraints. This approach is particularly effective for solving high-dimensional partial (integro) differential equations, arising from (stochastic and continuous-time) problems in virtually every field, like fluid dynamics, electromagnetism, biology, or finance (see the work of \cite{baldacci2019market}, \cite{baldacci2022governmental}). By incorporating physics directly into the architecture, PINNs enable the modeling of complex systems while reducing reliance on large datasets, bridging the gap between traditional numerical solvers and modern machine learning techniques. Despite all these difficulties, this default-time formulation is crucial for practical applications, as it makes both contract incentives and trading strategies more robust. This study contributes significantly to the literature on principal-agent problems, extending its applicability to real-world financial scenarios by providing some insights into the effects of random investment horizons and default times.\\

The organization of the paper is as follows. In Section \ref{chap2}, we present the mathematical formulation of the Principal-Agent problem under time uncertainty, describing the underlying stochastic framework, the controlled wealth process governing the system dynamics, and all the necessary assumptions to make the problem tractable. We will further define some classes of admissible contracts, and the optimization problems for both the principal and the agent. In Section \ref{chap3}, we solve the problem sequentially, first focusing on the agent’s optimal strategy, which is going to be the same for both cases of bounded and unbounded default. Despite the fundamental difference in the proof of existence, this trading strategy has the same form in both cases and will be plugged into the principal's problem. Then we will derive the Hamilton-Jacobi-Bellman (HJB) equation for the principal and we will claim, with a verification theorem, the existence of the solution to the partial differential equation (PDE) that encapsulates the principal’s optimization problem in the two proposed settings. Then, Section \ref{chap4} provides numerical examples that demonstrate the implementation of the theoretical results in concrete scenarios, for both cases, using default times from the families of the Beta and Exponential distributions. The goals are multiple: to show the differences arising within the same case, but also across the two different cases. Finally, we also want to highlight the sub-optimal behavior of a subset of optimal contracts that mimic real-world compensation schemes.\\

The code of our numerical simulations is publicly available on \url{https://github.com/GennaroAlberto/PrincipalAgent}.

\section{The model and the optimization problem}\label{chap2}

\subsection{Risky assets and portfolio dynamics}
We consider a financial market represented by a probability space $(\Omega,\mathcal F,\mathbb P)$ endowed with a $d$-dimensional Brownian motion denoted by $W$ and a finite horizon $T>0$. We denote by $\mathbb F:=(\mathcal F_t)_{t\in[0, T]}$ the natural filtration of this Brownian motion. This market consists of $m$ risky assets with vector price $S_t$ at time $t$ with no risk-free rate. The risky assets have the following dynamics:
\begin{equation*}
    \frac{dS_t^i}{S_t^i} = b^i_t dt + \sigma^i_t dW_t \qquad \forall i = 1: m, 
\end{equation*}

where $\sigma^i,b^i$ are respectively $\mathbb R^{1\times d}-$valued and  $\mathbb R$-valued bounded $\mathbb F-$predictable processes. We define the $\mathbb R^{m\times d}-$valued covariance matrix $\sigma$ where $\sigma^{i,j}$ is the $j$-th component of $\sigma^i$. We assume that $\sigma\sigma^\top$ is an invertible matrix, that is $\sigma\sigma^\top$ is $\mathbb P\otimes dt-$a.e. elliptic. Note that 
\[dS_t = B_t dt+\Sigma_t dW_t,\]
where $S=(S^1,\dots,S^m)$, $B=(b^i S^i)_{1\leq i\leq m}$ and $\Sigma$ is a $m\times d$ matrix with the $i$-th row is $\Sigma^i_t=S_t^i \sigma^i_t$.

We define $\theta_t = \sigma_t^\top (\sigma_t \sigma_t^\top)^{-1} b_t$. Let $\pi_t = (\pi_t^i)_t$ be a vector in $\mathbb R^{1\times m}$ representing the fraction of money invested in every asset at time $t$. We refer to it as the \textit{investment strategy} of the portfolio manager. We set $\beta_t = \pi_t \sigma_t$. Note that we can also refer to $\beta$ or $\pi$ as the investment strategy interchangeably up to the volatility factor $\sigma$. For every $\pi$, one can define a probability measure $\mathbb P^\pi$ such that the dynamics of the value of the portfolio  starting with $X_0 = x$ is then given by\footnote{We refer to Appendix A in \cite{baldacci2022governmental} for the rigorous formulation of the problem and the choice of the probability $\mathbb P^\pi$.}
\[
    X_t := x + \int_0^t \pi_s \sigma_s dW_s + \int_0^t \pi_s b_s ds, \]
    or equivalently
\[ X_t:= x + \int_0^t \beta_s dW_s + \int_0^t \beta_s \theta_s ds.
\]

\subsection{Default time and enlargement of filtration}
The default time is represented by a random variable $\tau$ taking values in $\mathbb R^+$. We define the default process by $H_t := \ind_{\tau \leq t}$. Note that this process is not necessarily $\mathbb F-$ measurable, and may require extra information or can be simply exogenous to the system and so independent of $W$. We thus enlarged the available information and also the filtration $\mathbb F$ taking into account the information generated by the default time occurrence. 
\begin{definition}
   Let $\mathcal H_u:=\sigma(H_s, s \in [0, u])$ be the $\sigma-$algebra generated by $H$ until time $u\geq 0$. Given a filtered space $(\Omega, \mathcal{F}_T, \F, \mathbb{P})$, the enlarged filtration
    \[
        \G = (\mathcal{G}_t)_{t \in [0,T]}, \qquad \mathcal{G}_t = \underset{\epsilon > 0}{\bigcap} \{ \mathcal{F}_{t+\epsilon} \vee \mathcal H_{t +\epsilon} \},
    \]
    is the smallest enlargement of $\F$ such that $\tau$ is a $\G$-stopping time.
\end{definition}

\begin{remark}
$H_t$ is not $\mathbb F$-measurable but it is $\G$-measurable stochastic process. 
\end{remark}
The goal is to ensure that the inaccessible default time $\tau$ enables us to enlarge the filtration to $\mathbb G$ and transferring the martingale property from $\mathbb F$ to $\mathbb G$ known as the immersion property or H-hypothesis. The first fundamental hypothesis is to set the existence of a (conditional) density for the default time with a certain property.

\begin{hypothesis}[Density Hypothesis]\label{densityhyp}
For any $t\geq 0$, there exists an $\mathcal F_t\otimes \mathcal B((0,\infty))$ measurable map $\gamma(t, \cdot)$ such that
\[   
    \p(\tau \geq x | \mathcal F_t) = \int_{x}^\infty \gamma(t, u) du \qquad \forall x \geq 0
\]
and $\gamma(t, u) = \gamma(u,u) \ind_{t\geq u}$ 
\end{hypothesis}
As a consequence of this assumption, see for example \cite{bremaud1978changes, el2010happens}, any $\F$-martingale is also a $\G$-martingale. Furthermore, and still under the density hypothesis, the process $H$ admits an absolutely continuous compensator, \textit{i.e.,} there exists a non-negative $\G$-predictable process $\lambda^{\G}$, such that the compensated process $M$ defined by
\[M_t := H_t - \int_{0}^t \lambda_s^{\G} ds
\]
is a $\G$-martingale. The compensator vanishes after time $\tau$ (therefore $\lambda^{\G}_t := \lambda_t \ind_{t \leq \tau}$) and 
\[
    \lambda_t := \frac{\gamma(t,t)}{\p(\tau > t | \mathcal F_t)}
\]
is a $\F$-predictable process. For a complete and deeper discussion on the properties of enlarged filtrations, we refer the reader to \cite{Guo_2008}. We set $\Lambda_t:=\int_0^t \lambda_s ds$. As a consequence of Proposition 4.4 in \cite{el2010happens} 
\[\mathbb P(\tau>t|\mathcal F_t)=e^{-\Lambda_t}.\]
We now turn to the integrability of the process $\lambda$ and the support of the default time $\tau$. Denoting by $\mathcal{T}(\mathbb{A})$ the set of $\mathbb{A}$-stopping times (so we will have $\mathcal{T}(\F)$ or $\mathcal{T}(\G)$), we consider two cases\\

\noindent \textit{Hypothesis A - unbounded default.} 
\begin{equation} \label{H:unbounded}
    \tag{HA}
    \esssup_{\rho \in \mathcal{T}(\mathcal G)} \E\Bigl[\int_{\rho}^T \lambda_s ds \, | \,  G_{\rho}\Bigl] < + \infty
\end{equation}

As a direct consequence of the tower property and since $\F \subseteq \G$, Hypothesis A leads to
\[
    \esssup_{\rho \in \mathcal{T}(\G)} \E\Bigl[\int_{\rho}^T \lambda_s ds \, | \, \mathcal F_{\rho}\Bigl] < + \infty
\]
Consequently, $\mathbb P(\tau\in [0,T])<1$, the support of $\tau$ strictly contains $[0,T]$.\\

\noindent \textit{Hypothesis B - bounded default}
\begin{equation} \label{H:bounded}
    \tag{HB}
    \esssup_{\rho \in \mathcal{T}(\G)} \E\Bigl[\int_{\rho}^t \lambda_s ds \, | \,  \mathcal G_{\rho}\Bigl] < + \infty \quad \forall t < T, \qquad \E\Bigl[ \Lambda_T \Bigl] = \infty.
\end{equation}
Hence, \[
    \esssup_{\rho \in \mathcal{T}(\G)} \E\Bigl[\int_{\rho}^t \lambda_s ds \, | \,  \mathcal F_{\rho}\Bigl] < + \infty \quad \forall t < T, \qquad \E\Bigl[ \Lambda_T \Bigl] = \infty.
\]
Consequently, $\mathbb P(\tau\in [0,T])=1$, the support of $\tau$ is included in $[0,T]$.

\subsection{Admissible strategy and contracts}
 Admissible strategies $\pi$ may be restricted to a closed subset of $\mathbb{R}^m$. We set the rigorous definition of an admissible strategy below, following \cite[Definition 1]{hu2005utility}.
 \begin{definition}[Admissible strategy with constraints]\label{admissible}
     Let $C$ be a closed and convex set in $\mathbb R^{1\times m}$. The set of admissible strategies, denoted by $\mathcal A$, consists of $m$-dimensional $\mathbb F-$predictable process $\pi$ such that $\mathbb E[\int_0^T|\pi_t|^2dt]<\infty$ and $\pi_t\in C$, $dt\otimes \mathbb P-$a.e.
 \end{definition}
 Note that due to the nature of the problem considered and by considering a compensation $\xi$, other integrability conditions are transferred to the admissibility of the contract below. 
Some examples of admissible strategy sets include:
\begin{itemize}
    \item $C = [0,1]^m$, that is $\pi$ is a proportion of the total wealth $X^\pi$ that is invested in the portfolio with no possibility to borrow or spend more than the actual value of $X^\pi$. It does not permit shorting stocks (i.e. selling stocks borrowed).
    \item $C =[-1,1]^m$, which does not permit leveraged positions.
    \item $C=[-M,M]^m$ assuming that the investor can spend or borrow as much money as needed limited to a fraction $M$ of the total wealth (possibly greater or less than 1 or $-1$).
\end{itemize}

For a symmetric positive definite matrix $Q$, we define the norm of a column vector $x$ by 
\[
    ||x||_{Q} = x^T Q x.
\]
This norm is equivalent to the euclidean norm in $\mathbb{R}^n$, and the constants of equivalence are the smallest and biggest eigenvalues of the matrix $Q$.
From this definition of $Q$-norm, we define the $Q$-distance of $x$ to the set $C$ as 
\[
    dist_Q(x, C) := \inf_{y \in C} \{ ||x - y||_Q \}.
\]

The contract $\xi$ proposed by the investor follows the idea of \cite{cvitanic2017moral}. We denote by $\eta>0$ a risk aversion parameter for the portfolio manager with CARA exponential utility function $U_A(x)=-e^{-\eta x},$ and assume that the manager may have to track an index and stay close to a benchmark strategy $\alpha$ over time given by a deterministic and bounded function of time with penalty coefficient $\varepsilon\geq 0$. A more rigorous definition is given in Section \ref{subseciton:delegatedmanagement} below. 

\begin{definition}[Admissible contract with contractible variables]\label{def:contract} We denote by $\Xi$ the set of admissible contracts $\xi$ composed of $\mathcal G_{\tau\wedge T}-$measurable random variable $\xi=Y_{T\wedge \tau}^{Y_0,Z,Z^X,U,\Gamma^X,\Gamma}$, controlled by $\mathbb G-$predictable real-valued processes $U,Z=(Z^i)_{1\leq i\leq m},Z^X,\Gamma^X,\Gamma=(\Gamma_i)_{1\leq i\leq m}$ such that $\mathcal{I}_m - \Gamma^X_t \sigma_t \sigma^T_t$ is a positive definite matrix\footnote{$\mathcal{I}_m$ denotes the identity matrix in dimension $m$.} and
    \begin{align*}
   Y_t^{Y_0,Z,Z^X,U,\Gamma^X,\Gamma}&= Y_0+ \int_{0}^{t}\sum_{i=1}^m \frac{Z^i_r}{S^i_r} dS^i_r + \int_{0 }^{t} Z_r^X dX_r + \int_{0}^{t} U_r dH_r \\
    &+ \frac{1}{2}\int_{0}^{t} (\Gamma_r^X + \eta (Z_r^X)^2 ) d\langle X, X\rangle_r + \int_{0}^{t} \sum_{i=1}^m \frac{\Gamma_r^i}{S^i_t} d\langle S^i, X\rangle_r\\
    &- \int_{0}^{t} F(r, Z_r, Z_r^X, \Gamma_r, \Gamma_r^X, U_r) dr,
\end{align*}
where \[ F(t, z, z_x, g, g_x, u)=\sup_{\nu\in C} f(t,z,z_x,g,g_x,u,\nu),\] with $f:[0,T]\times\Omega\times \mathbb R^m\times\mathbb R\times \mathbb R^m\times\mathbb R\times \mathbb R\times C$ by
\begin{align*} f(t,z,z_x,g,g_x,u,\nu)&= z b_t +z_x\nu \sigma_t \theta_t + \frac{1}{2} g_x\| \nu\sigma_t\|^2 - \frac{\varepsilon}{2}\|\nu - \alpha_t\|^2 +\sum_{i=1}^m g^i \nu^i \sigma_t^i(\sigma_t^i)^\top\\
&- \frac{\lambda_t}{\eta} (\exp{(-\eta u) - 1})- \frac{\eta}{2}||z \sigma_t||^2,\end{align*}
 so that
\begin{equation}\label{eq:generator}
    F(t, z_x, z, g_x, g, u) = \frac{1}{4}q_t^T Q_t^{-1} q_t + z b_t - \frac{\varepsilon}{2} ||\alpha_t||^2 - \frac{\lambda_t}{\eta} (\exp{(-\eta U_t) - 1}) - \frac{\eta}{2}||z \sigma_t||^2+ dist^2_Q(d_t, C)
\end{equation}
where 

 \[ Q_t = \frac{1}{2}(\mathcal{I} - g_x \sigma_t \sigma^T_t ),\quad  q_t = (\sigma_t \theta_t + \sigma_t \sigma^T_t g + \varepsilon\alpha_t - \eta z_x  \sigma_t \sigma^T_t z)\]
\[
    d_t = \frac{1}{2} Q_t^{-1} q_t, 
\]
and there exists $\eta'>\eta$ such that   \[\mathbb E\Big[\int_0^T (\|Z_s\|^2+\|Z^X_s\|^2+\|\Gamma_s\|+\|\Gamma^X_s\| +\|U_s\|^2\lambda_s) ds + \sup_{0\leq t\leq T} e^{\eta'|Y_t^{Y_0,Z,Z^X,U,\Gamma,\Gamma^X}|}\Big]<\infty.\]

The set of processes $Z,Z^X,U,\Gamma,\Gamma^X$ satisfying this integrability condition is denoted by $\mathcal U$ while their restriction to $[0,t]$ is denoted by $\mathcal U_t$ for any $t<T$.
   \end{definition}

    \paragraph{Economic interpretation.} 

    \begin{itemize}
        \item $Y_0$ is a fixed compensation determined by the reservation utility of the portfolio manager;
        \item the integrand process $Z^i$ represents a compensation with respect to the evolution of the $i$-th asset $S^i$, if it is observable by the investor and so contractible;
        \item $Z^X$ is a compensation term with respect to the portfolio dynamics, always observable by the investor;
        \item $U$ is a compensation with respect to the default risk of the market;
        \item $\Gamma^i$ is a compensation with respect to the covariation of $S^i$ and $X$ while the term $\Gamma^X+\gamma(Z^X)^2$ is a compensation driven by the risk aversion of the manager with respect to the quadratic variation of the portfolio;
        \item $F$ is the certainty equivalent utility gained by the portfolio manager when solving her optimization problem. The gain resulting from this optimization is transferred into the contract.
    \end{itemize}

       \begin{remark}
A rigorous justification of this set of contracts is given in Appendix \ref{app:2bsdecontract}. This appendix provides a rigorous framework for defining the probability space and the weak formulation of the portfolio manager's problem together with its solution by using second-order BSDEs and the recent results in \cite{gennaro20252bsde}. 
        \end{remark}
    \begin{remark}
We can refine the set of contracts depending on the information available to the investor (see \cite{cvitanic2017moral}).
    \begin{itemize}
         \item We denote by $\Xi^\circ\subset \Xi$ the set of random variables $\xi=Y_T^{Y_0,Z^X,U,\Gamma^X}$ with $Z^i=\Gamma^i=0$
corresponding to the case where $S$ is not observable by the investor and so not contractible.
\item The set of linear contracts $\Xi^l\subset \Xi$ defined by 
\begin{align*}
   \xi &=Y_0+ \int_{0}^{\T} \sum_{i=1}^m \frac{Z^i_s}{S^i_s} dS^i_s + p(X_{\T} - X_0) + \int_{0}^{\T} U_s dH_s + \frac{1}{2}\int_{0}^{\T} (\Gamma_s^X + \eta p^2 ) d\langle X, X\rangle_t\\
  & - \int_{0}^{\T} F(s, Z_s, p, \Gamma_s, \Gamma_s^X,U_s) ds+ \int_{0}^{\T} \sum_{i=1}^m \frac{\Gamma_s^i}{S^i_s} d\langle S^i, X\rangle_t 
\end{align*}
with contractible $S$. The idea around this contract is that in practice, most of fund managers ask as a compensation to their clients a fixed percentage of the terminal wealth, forcing $Z_s^X = p$ in $\Xi$, where $p$ is the fixed percentage the agent receives.
\end{itemize}

In all these cases, $Z=(Z^i)_{1\leq i\leq m},Z^X,U,\Gamma,\Gamma^X$ are predictable processes such that $\xi\in \mathcal C$ and all the stochastic integrals are martingales.  
    \end{remark}

\begin{remark}
    Note that the set of contracts $\Xi$ is stated without loss of generality as explained in \cite{cvitanic2018dynamic,chiusolo2024new} as soon as we consider general contracts $\xi$ with exponential moment of any order, requiring to refine the definition of $\mathcal U$ in order to apply the existence results in \cite{kharroubi2013mean,jeanblanc2015utility}. 
\end{remark}

\subsection{Delegated portfolio management and bi-level stochastic programming}\label{subseciton:delegatedmanagement}
We are assuming that the portfolio manager is receiving the contract $\xi$ and optimally chooses a strategy $\pi$ in order to remain close to a benchmark strategy $\alpha$ \textit{\`a la} Almgren-Chriss, see \cite{almgren2001optimal} so that the objective of the manager is to solve for any contract $\xi\in \Xi$ fixed
\begin{equation}\label{pbAgent}  \tag{A} V^A_0(x,\xi)=\sup_{\pi\in \mathcal A} J^A(\pi;x,\xi),\end{equation}
with
\[J^A(\pi;x,\xi):=\mathbb E[ U^A(\xi-\frac{\varepsilon}2 \int_{0}^{\T} \|\pi_s - \alpha_s\|^2 ds)],\]
and where $\varepsilon\geq 0$ represents the weight on the penalty. Note that when $\varepsilon=0$ we do not track any index, which is consistent with the cryptocurrency motivation. In contrast, taking $\varepsilon>0$ introduces index-tracking power (as in an ETF) and aligns with the pension-plan motivation. In our model, the investor chooses a full delegation of the portfolio management to the manager and thus lets the manager choose the optimal strategy to optimize the terminal value of the portfolio under default. In this \textit{second-best} case, the contracting problem is reduced to solving a bi-level optimization under constraint as follows when $S$ is contractible

\begin{equation}    \label{eq:principalPb}
    \tag{P}
     V_0^P(x) := \sup_{(\xi,\hat\pi) \in \mathcal C\times \mathcal{A}} \mathbb{E}[U^P(X_{T\wedge \tau}-\xi)]
\end{equation}
subject to
\begin{itemize}
    \item (R):\; $V_0^A(x,\xi)\geq R_0$
    \item (IC):\; $V_0^A(x,\xi) = J^A(\hat\pi;x,\xi)$.
\end{itemize}

We will refer to the first problem \eqref{pbAgent} as the Problem of the Agent, and the bi-level program (P) as the Problem of the Principal. 
Furthermore, we want to emphasize the fact that $\tau$ does not need to be a stopping time in the natural filtration given by the assets' dynamics. Despite this, most results regarding default times are based on stopping-time theory, so we want to work with a filtration such that $\tau$ can be a stopping time.

\begin{remark}[First-best solution]

Alternatively, one could assume that the investor controls both the allocation strategy and the compensation proposed to the portfolio manager. In this \textit{first-best} situation, the problem of the principal (investor) is reduced to an optimization of its projected P\&L over both $\xi$ and $\pi$ subject only to the reservation constraint. That is, solving:

\begin{equation*}
    V^{FB}_0 = \sup_{(\xi,\pi)} \mathbb{E} \left[ U^P(X^\pi_{T\wedge \tau} - \xi) \right],
\end{equation*}

subject to 

\begin{equation*}
    J^A(\pi; x, \xi) \geq R.
\end{equation*}

Note that $V^{FB}_0 \geq V^P_0$. We refer to Appendix \ref{FB-appendix} for the details and the solution of this first-best benchmark.
\end{remark}

\begin{remark}[Extension to Merton's problem with consumption]
A further extension of this problem considers an investor who chooses to consume a portion of the portfolio continuously over time. Here, the consumption strategy $c_t$ given by an $\mathbb F-$adapted random process. The specific mathematical derivations for this case are provided in Appendix \ref{consumption extension}.
\end{remark}

\section{Optimal contract and investment strategy}\label{chap3}

\subsection{Optimal investment with random horizon: solving the agent problem}
A common approach in the continuous stochastic optimization literature is based on the solution of Backward Stochastic Differential Equations by solving a martingale optimality principle. We refer the reader to \cite{hu2005utility} for a detailed explanation of the method in a continuous setting and to \cite{morlais2009utility,kharroubi2013mean,jeanblanc2015utility} for extensions to discontinuous processes or default times. The general idea is to generate a family of super-martingales $(R^\pi)$ indexed by the control variable, in our case the investment strategy $\pi$, with terminal condition the objective function of the agent $J^A$ at time $T\wedge \tau$. If we are able to find a specific control $\hat\pi$ such that $R^{\hat\pi}$ is a martingale, this control is optimal for \eqref{pbAgent}, and the optimal value is given by the process corresponding to the optimal control at time $0$. 

\begin{lemma}[Martingale Optimality Principle] \label{theorem:MOP}
    Let $\xi\in \Xi$ and let $(R^{\pi}_{t\wedge \tau})_{t \in [0, T]}$ be a family of stochastic processes indexed by the strategy $\pi \in \mathcal A$ such that
    \begin{enumerate}
        \item[(i)] $R^{\pi}_{T\wedge \tau} = U^A(\xi- \frac{\varepsilon}2\int_{0}^{\T} \|\pi_s - \alpha_s\|^2 ds)$,
        \item[(ii)] $R^{\pi}$ is a $\G$-supermartingale and $R^{\pi}_0 $ is constant for all $\pi \in \mathcal A$,
        \item[(iii)] there exists $\hat\pi\in \mathcal A$ such that $R^{\hat\pi}$ is a $\G$-martingale.
    \end{enumerate}
    Then, $\hat\pi$ is a solution to the maximization problem \eqref{pbAgent}.
\end{lemma}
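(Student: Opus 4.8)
The plan is to run the classical \emph{martingale optimality principle} verification: properties (i)--(iii) are arranged precisely so that the value $J^A(\pi;x,\xi)$ can be sandwiched between a common upper bound valid for every admissible strategy and an equality attained by $\hat\pi$. Throughout, the only analytic point requiring care is that each $R^\pi_{\T}$ is genuinely integrable, so that the (super)martingale inequalities may be evaluated at the endpoints $0$ and $T$; this I would read off from the admissibility constraints, namely the exponential-moment bound on $Y$ built into the definition of $\Xi$ together with the square-integrability of $\pi$ in $\mathcal A$, which jointly keep $U^A\big(\xi-\int_0^{\T}\|\pi_s-\alpha_s\|^2 ds\big)$ in $L^1$.

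First I would fix an arbitrary $\pi\in\mathcal A$ and use (i) to rewrite the objective as the expectation of the terminal value of the indexed process:
\[
J^A(\pi;x,\xi)=\E\big[U^A\big(\xi-\textstyle\int_0^{\T}\|\pi_s-\alpha_s\|^2\,ds\big)\big]=\E[R^\pi_{\T}].
\]
Applying the $\G$-supermartingale property from (ii) to the stopped process $(R^\pi_{t\wedge\tau})_{t\in[0,T]}$ between times $s=0$ and $t=T$ gives $\E[R^\pi_{\T}\mid \mathcal G_0]\le R^\pi_0$, and since $R^\pi_0$ is a deterministic constant independent of $\pi$—call it $R_0$—taking expectations yields $J^A(\pi;x,\xi)\le R_0$. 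As $\pi$ was arbitrary, this shows $V^A_0(x,\xi)=\sup_{\pi\in\mathcal A}J^A(\pi;x,\xi)\le R_0$.

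Finally I would upgrade the inequality to an equality along $\hat\pi$. By (iii) the process $R^{\hat\pi}$ is a true $\G$-martingale, so the inequality above holds with equality, giving $J^A(\hat\pi;x,\xi)=\E[R^{\hat\pi}_{\T}]=R^{\hat\pi}_0=R_0$. Combining the two displays, $J^A(\hat\pi;x,\xi)=R_0\ge J^A(\pi;x,\xi)$ for every $\pi\in\mathcal A$, hence $\hat\pi$ attains the supremum in \eqref{pbAgent} and is optimal, with the bonus identification $V^A_0(x,\xi)=R_0$. The statement as posed is therefore an essentially formal consequence of (i)--(iii); the substantive difficulty lies not in this verification but in the \emph{construction} of a family $(R^\pi)$ meeting all three conditions simultaneously, which is where the BSDE driven by the generator $F$ and the filtration-enlargement machinery will have to do the real work in the subsequent analysis.
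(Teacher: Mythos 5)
Your argument is correct and coincides with the paper's own proof, which is exactly the chain $J^A(\pi;x,\xi)=\E[R^\pi_{\T}]\le \E[R^\pi_0]=\E[R^{\hat\pi}_0]=\E[R^{\hat\pi}_{\T}]=J^A(\hat\pi;x,\xi)$ using (i) for the identification of the terminal value, (ii) for the supermartingale inequality and the constancy of $R^\pi_0$, and (iii) for the equality along $\hat\pi$. Your additional remark on integrability of $R^\pi_{\T}$ is a reasonable point of care but does not change the substance of the argument.
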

\begin{proof}
    Take $\pi \in \mathcal A$. Then, we have
    \begin{align*}
       J^A(\pi;x,\xi) = \E[R^{\pi}_{\T}] \overset{(i)}\leq \E[R^{\pi}_0]\overset{(ii)}  =  \E[R^{\pi^*}_0] \overset{(iii)} =  \E[R^{\pi^*}_{\T}] = J^A(\hat\pi;x,\xi).
    \end{align*}\end{proof}
Let $\xi\in\Xi$ be fixed. 
Independently of the boundedness of the default time, that is either under Hypothesis A or Hypothesis B, we define
\[ R^{\pi}_{t\wedge \tau} := U^A(Y^\pi_{t\wedge\tau} - \int_0^{t\wedge\tau} \|\pi_s - \alpha_s\|^2 ds),\]
   where $Y^\pi$ is defined by
\begin{align*} \label{eq:BSDE}
    Y^\pi_{t \wedge \tau} &=Y_0+ \int_{0}^{t \wedge \tau} \sum_{i=1}^m \frac{Z^i_s}{S^i_s} dS^i_s + \int_0^{t \wedge \tau} Z_s^X dX_s + \int_{0}^{t \wedge \tau} U_s dH_s \\
    &+ \frac{1}{2}\int_0^{t \wedge \tau} (\Gamma_s^X + \eta (Z_s^X)^2 ) d\langle X, X\rangle_s - \int_0^{t \wedge \tau} F(s, Z_s, Z_s^X, \Gamma_s, \Gamma_s^X,U_s) ds\\
    &+\int_0^{t \wedge \tau} \sum_{i=1}^m \frac{\Gamma^i_s}{S^i_s} d\langle S^i, X\rangle_s
\end{align*}

\begin{theorem}\label{thm:value}
    Assume that \ref{densityhyp} and either Hypothesis A or Hypothesis B are satisfied. For any $\xi\in \Xi$, the optimal strategy that solves \eqref{pbAgent} is 
  \begin{equation}\label{eq:optStrat}
        \hat\pi_t =\pi^*(Z_t, Z_t^X, \Gamma_t, \Gamma^X_t),\text{ with }\pi^*(z,z_x,g,g_x):= proj(e_t, C) \end{equation}

    and the optimal value is given by 
        $V^A_0(x,\xi) = -e^{-\eta Y_0}$, where
    \[
        e_t = \frac{1}{2}q_t Q_t^{-1},\; q_t = (\sigma_t \theta_t + \sigma_t \sigma^T_t \Gamma_t + \varepsilon\alpha_t - \eta Z_t^X  \sigma_t \sigma^T_t Z^T_t)
    \]

    and
    \[
        Q_t = \frac{1}{2}(\mathcal{I} - \Gamma^X_t \sigma_t \sigma^T_t ).
    \]
\end{theorem}

\begin{proof}

    The proof follows the same idea as \cite{hu2005utility} for the continuous case and \cite{jeanblanc2015utility, morlais2009utility} for the discontinuous case extended to the multi-dimensional case and the contract $\xi$ fixed by the principal. Note that $Z = (Z^i)_{i=1}^m$ and $\Gamma = (\Gamma^i)_{i=1}^m$ are already $\mathbb{R}^m$ stochastic row vectors and that we are going to denote by $C$ the set of admissible trading strategies $\pi = (\pi_t)_{t \in [0, T]}$. The proof will be based on Ito's formula with Poisson jumps (as the intensity of our jump is the same as that of a simple Poisson process with varying intensity $\lambda$) and we refer to \cite{privault2022introduction,ikeda2014stochastic}, for more details on stochastic calculus with semi-martingales; and the notion of \emph{Doléans-Dade exponential process} (DDE). Given a semimartingale $P$, we denote its DDE by 
        \[
            \mathcal{E}(P)_t = \exp\{ P_t - \frac{1}{2} \langle P^c \rangle_t\} \prod_{0 \leq s \leq t}\{(1 + \Delta_s P) \exp( -\Delta_s P)\},
        \]
        where $P^c$ is the continuous component of the path of $P$ while $\Delta_s P:=P_s-P_{s-}$.
        Note that the DDE of $P$ is a solution of the following SDE $dY_t = Y_{t-} dX_t$
        and that, provided $P$ is a $BMO$-martingale, such that its jump is greater than $-1+\delta$, with $\delta > 0$, the resulting DDE is a martingale as well.\\

To find the optimal solution, we are applying the so-called martingale optimality principle. We define the following family of stochastic processes indexed by the strategy $\pi$
    \begin{equation*}
        R^{\pi}_{t} = U^A(Y_t - \frac{\varepsilon}2\int_0^t ||\pi_s - \alpha_s||^2 ds).
    \end{equation*}
    We set $B^\pi_t := Y_t - \frac{\varepsilon}2\int_0^t ||\pi_s - \alpha_s||^2 ds$, so that $R^\pi_t = -\exp{(-\eta B^\pi_t)}$. Note that the Ito's decomposition of $B^\pi$ is given by
    \begin{align*}
        dB^\pi_t 
        &= \sum_{i=1}^{m}\frac{Z_t}{S^i_t} dS^i_t + Z_t^X dX_t + U_t dH_t + \frac{1}{2} (\Gamma_t^X + \eta (Z_t^X)^2 ) d\langle X, X\rangle_t\\
        &- (F(t, Z_t, Z_t^X, \Gamma_t, \Gamma_t^X,U_t) + \frac{\varepsilon}2||\pi_t - \alpha_t||^2) dt + \sum_{i=1}^{m}\frac{\Gamma^i_t}{S^i_t} d\langle S^i, X\rangle_t,\; t \in [0, \T].
    \end{align*}

Hence,
    \begin{align*}
        dB^\pi_t&=J^\pi_t dt+ (Z_t \sigma_t + Z^X_t \pi_t \sigma_t)dW_t+ U_t dH_t,
    \end{align*}
where 
\[J^\pi_t= Z_t b_t + Z^X_t \pi_t \sigma_t \theta_t + \frac{1}{2}(\Gamma^X_t + \eta(Z^X_t)^2)|| \pi_t \sigma_t||^2 - F(t, Z_t, Z_t^X, \Gamma_t, \Gamma_t^X,U_t)  + \pi_t \sigma_t \sigma^T_t \Gamma_t - \frac{\varepsilon}{2}|| \pi_t - \alpha_t||^2).  \]
We thus deduce that
    \begin{align*}
        dR^\pi_t&= -\eta R^\pi_t  J^\pi_t dt+ \frac{\eta^2}{2} R^\pi_t (Z_t \sigma_t + Z^X_t \pi \sigma_t)dW_t+ R^\pi_{t_{-}} (\exp{(-\eta U_t) - 1}) dM_t
    \end{align*}
    where we used the fact that $dH_t = dM_t + \lambda_t dt$. Note that $R_0 = -\exp{(-\eta Y_0)} < 0$ and $J^\pi\leq 0$ and is equal to zero when $\pi$ is chosen by maximizing
    \[
        -\pi_t Q_t \pi_t^T + \pi_t q^T + d_t
    \]
    with
    \[
        Q_t = \frac{1}{2}(\mathcal{I} - \Gamma^X_t \sigma_t \sigma^T_t ),
    \]
    \[
        q_t = (\sigma_t \theta_t + \sigma_t \sigma^T_t \Gamma_t + \varepsilon \alpha_t - \eta Z_t^X  \sigma_t \sigma^T_t Z^T_t),
    \]
    and
    \[
        d_t = Z_t b_t - F - \frac{\eta}{2} || Z_t \sigma_t ||^2 - \frac{\varepsilon}{2} ||\alpha_t||^2 - \frac{\lambda_t}{\eta} (\exp{(-\eta U_t) - 1}).
    \]

  Therefore,
    \[
        \pi^*(Z_t, Z_t^X, \Gamma_t, \Gamma^X_t) = proj(e_t, C)
    \]
    so that $\pi^*$ is optimal and is unique because our set $C$ is closed and convex.
    \end{proof}

\begin{remark}
    Note that if $m=d=1$ and $C=\mathbb R$ and $\sigma$ and $b$ are constant we obtain 
    \begin{align*}
        F(t, z, z_x, g, g_x, u) 
        &= \frac{1}{2} \Bigl(\frac{z_x\sigma \theta - \varepsilon \alpha_s -\eta z z_x \sigma^2 + \sigma^2 g}{\sqrt{1 - g_x \sigma^2}}\Bigl)^2 + zb \\
        &- \frac{\varepsilon}{2}\alpha_t^2 - \frac{\lambda_t}{\eta}((e^{-\eta u} - 1)) - \frac{\eta}{2} z^2 \sigma^2.
    \end{align*}
\end{remark}
\begin{remark}
As a consequence of the constraint (IC), we are implicitly looking for contracts such that  $\hat\pi\in \mathcal U$. This requires to consider $(Z,Z^X,\Gamma,\Gamma^X,U)$ such that $\hat\pi=\pi^*(Z, Z^X, \Gamma, \Gamma^X)\in \mathcal U$, otherwise (IC) is not satisfied and there are no optimal contracts. 
\end{remark}

\begin{remark}[Extension to CRRA utilities]
We have shown the optimal strategy for the CARA utility. We extend the analysis to power CRRA utilities in Appendix \ref{AppendixPower}. While for general utilities, we might not be able to find an explicit formula for the driver F and the optimal strategy $\pi$, it is possible to find them at least numerically. 
\end{remark}

\subsection{The optimal contract and verification results}\label{chap4}
In the previous chapter, we found the optimal strategy $\pi^*$ for the agent under the best optimal contract with contractible $S$. Given that strategy, we now want to solve the problem for the principal of \eqref{eq:principalPb}, i.e., we want to find the compensation $\xi$ to maximize the principal utility:
\begin{equation}\label{principalpb}
   V_0= \sup_{\xi\in \Xi \text{ s.t } V_0^A(x) \geq R}\mathbb{E}[U^P(X_{\T}^{\pi^*} - \xi)]
\end{equation}

where we used $X_{\T}^{\pi^*}$ to stress that now the object of our maximization problem is dependent on the agent’s best strategy $\pi^*$. Before diving into the intuition behind how to solve this problem, it is important to show how this becomes a control problem so that an HJB equation can be derived. To do so, we first want to work with $\xi \text{ s.t } V_0^A(x) \geq R$. 
We recall that $V_0^A(x) = -\exp( - \eta( Y_0))$, so that
\[ V_0^A(x) \geq R \Longleftrightarrow  Y_0 \geq - \frac{\log(-R)}{\eta} := \widehat{Y}_0.
\]

Moreover,
\begin{align}\label{xi:decomp}
   \nonumber \xi= Y^\pi_{\T} &=Y_0+ \int_{0}^{\T} \sum_{i=1}^m \frac{Z^i_t}{S^i_t} dS^i_t + \int_0^{\T} Z_t^X dX_t + \int_{0}^{t \wedge \tau} U_s dH_s \\
     \nonumber &+ \frac{1}{2}\int_0^{\T} (\Gamma_t^X + \eta (Z_t^X)^2 ) d\langle X, X\rangle_t - \int_0^{\T} F(t, Z_t, Z_t^X, \Gamma_t, \Gamma_t^X,U_t) dt\\
      &+\int_0^{\T} \sum_{i=1}^m \frac{\Gamma^i_s}{S^i_t} d\langle S^i, X\rangle_t 
\end{align}
Therefore, we can rewrite \eqref{principalpb} as
  \begin{align*} 
    V_0=\sup_{Y_0\geq \hat R_0} \sup_{(Z, Z^X, \Gamma, \Gamma^X, U)\in \mathcal U} \mathbb{E}[U^P(X_{\T}^{\pi^*} -Y^{\pi^*}_{\T})].
\end{align*}
Since $U^P$ is nondecreasing in $Y_0$, we deduce that 

  \begin{align} \label{eq:preHJB}
    V_0= \sup_{(Z, Z^X, \Gamma, \Gamma^X, U)\in \mathcal U} \mathbb{E}[U^P(X_{\T}^{\pi^*} -\hat Y^{\pi^*}_{\T})],
\end{align}
where 
$\hat Y^{\pi^*}$ is given by \eqref{xi:decomp} with $Y_0=\hat R_0:=-\log(-R)/\eta.$

\begin{remark}[Risk neutral investor]
    Assume that $U^P(x)=x$, then we note that 
    \[
    V_0= \hat V_0 - \hat R_0,
 \]
 where \eqref{eq:preHJB} becomes

     \begin{align*}\hat V_0:=\sup_{(Z, Z^X, \Gamma, \Gamma^X, U)\in \mathcal U} \mathbb{E}[U^P(X_{\T}^{\pi^*} -\hat Y^{\pi^*}_{\T})],
\end{align*}
with 
\[
\begin{cases}
     &d\hat Y^{\pi^*}_t= \sum_{i=1}^m \frac{Z^i_t}{S^i_t} dS^i_t +  Z_t^X dX^{\pi^*}_t +  U_t dH_t+ \frac{1}{2} (\Gamma_t^X + \eta (Z_t^X)^2 ) d\langle X^{\pi^*}, X^{\pi^*}\rangle_t\\
     &\qquad - F(t, Z_t, Z_t^X, \Gamma_t, \Gamma_t^X,U_t) dt+\sum_{i=1}^m \frac{\Gamma^i_t}{S^i_t} dS^i_t d\langle S^i, X^{\pi^*}\rangle_t\\
     &\hat Y_0^{\pi^*}=0.
     \end{cases}\]
\end{remark}
In order to derive the HJB equation with control process $(Z, Z^X, \Gamma, \Gamma^X, U)$, we will use the results from \cite{blanchet2008optimal} assuming that $\tau$ has a probability density function $f$ and a cumulative distribution function $F$, whether the support of $\tau$ is bounded in $[0,T]$ or unbounded in the next two subsections. To derive the HJB equation from this problem, we set an additional assumption enforcing Markovian properties of the intensity $\lambda$, drift and volatility processes.\\

\noindent \textbf{Assumption (M)}. We assume that there exist two $\F-$progressive measurable functions $b,\sigma,\lambda$ such that $b_t=b(t,S_t),\sigma_t=\sigma(t,S_t)$ and $\lambda_t=(t,S_t,X_t)$.\\

Note that if $\tau$ is independent of $\mathbb F$, as we will consider for numerical simulations, the density of $\tau$, and hence $\lambda$, becomes a deterministic function for computations.\\

We define the following system of coupled SDEs with jumps with solution $(X^{\pi^*},Y^{\pi^*})$ controlled by $(Z,Z^X,\Gamma,\Gamma^X,U)$

\[
(SDE)\begin{cases}
     &dX^{\pi^*}_t =  \pi_t^*\sigma(t,X_t) dW_s + \pi_t^* b(t,S_t) dt\\
     &d\hat Y^{\pi^*}_t= \sum_{i=1}^m \frac{Z^i_t}{S^i_t} dS^i_t +  Z_t^X dX^{\pi^*}_t +  U_t dH_t+ \frac{1}{2} (\Gamma_t^X + \eta (Z_t^X)^2 ) d\langle X^{\pi^*}, X^{\pi^*}\rangle_t\\
     &\qquad - F(t, Z_t, Z_t^X, \Gamma_t, \Gamma_t^X,U_t) dt+\sum_{i=1}^m \frac{\Gamma^i_t}{S^i_t} dS^i_t d\langle S^i, X^{\pi^*}\rangle_t\\
     &X^{\pi^*}_0=x,\\
     &Y^{\pi^*}_0=\hat R_0. 
\end{cases}
\]

\subsubsection{Bounded default time}
Under Hypothesis B, the support of $\tau$ is included in $[0,T]$. Recalling the results in \cite{blanchet2008optimal}, we deduce that

\begin{equation*}
V_0= \sup_{(Z, Z^X, \Gamma, \Gamma^X, U)\in \mathcal U}\;  \E\bigl[ \int_{0}^T U^P(X_t^{\pi^*} -\hat{Y}^{\pi^*}_t) f(t)dt \bigl],
\end{equation*}
where $f$ is the density of $\tau$. We introduce the following Hamilton-Jacobi-Bellman integro-partial differential equation.

\[
\textbf{(bHJB)}\quad\begin{cases}
  &  \partial_t \phi(t,s,x,y) + U^P(x - y) f(t) + \sup_{(z^x, z, g^x, g, u)}\bigl \{   \mathcal H^\phi(t,s, x, y,  z^x, z, g^x, g, u) \bigl \}=0,\; t<T\\
&\phi(T,s,x,y)=0,\; (s,x,y)\in \mathbb R^m\times \mathbb R\times \mathbb R.
\end{cases}
\]
 where $\mathcal H^\phi$ is a differential operator given by
\begin{align*}
&\mathcal H^\phi(t,s,x,y,z^x,z,g^x,g,u):=\\
&\sum_{i=1}^m\phi_{s^i} s^ib^i(t,s)  + \phi_x \pi^*(z,z^x,g,g^x) \sigma(t,s) \theta(t,s) \\
&+ \frac12 Tr(\Sigma(t,s)\Sigma(t,s)^\top  D^2\phi)+ \frac{1}{2}\phi_{xx}||\pi^*(z,z^x,g,g^x)  \sigma(t,s)||^2\\
& + \phi_y \Big[z b(t,s) + z^x \pi^*(z,z^x,g,g^x) \sigma(t,s) \theta(t,s)
+ \frac{1}{2}||\sigma(t,s) \pi^*(z,z^x,g,g^x)||^2(g^x + \eta |z^x|^2) \\
&-  F(t, z, z^x, g, g^x,u) + \pi^*(z,z^x,g,g^x) \sigma(t,s) \sigma(t,s)^T g\Big]
   \\
&+ \frac{1}{2}\phi_{yy} (||z_t \sigma(t,s)||^2 + z^x||\pi^*(z,z^x,g,g^x) \sigma(t,s)||^2) \\
&+ \phi_{xy} (z^x ||\pi^*(z,z^x,g,g^x) \sigma(t,s)||^2  + \pi^*(z,z^x,g,g^x) \sigma(t,s) (z \sigma)^T)\\
&+ \sum_{i=1}^m (\pi^*(z,z^x,g,g^x)  \sigma(t,s) \sigma^i(t,s)^T s^i)(\phi_{xs^i} + \phi_{ys^i}z^x) + \lambda_t(\phi(t, s,x, y+u) - \phi(t,s, x, y))\\
&+\sum_{i=1}^m \phi_{ys^i} z^i \sigma(t,s) \sigma^i(t,s)^T s^i,
\end{align*}
with the notation $\Sigma^i(t,s)=s^i \sigma^i(t,s)$ is the $i$-th row of the matrix $\Sigma$.
We define 
\[\hat Z_t:=\hat z(t,S_t,X^{\pi^*}_t,\hat Y^{\pi^*}_t),\; \hat Z_t^X=\hat z_x(t,S_t,X^{\pi^*}_t,\hat Y^{\pi^*}_t),\; \hat U_t=\hat u(t,S_t,X^{\pi^*}_t, \hat Y^{\pi^*}_t),\]
\[\hat \Gamma_t=\hat g(t,S_t,X^{\pi^*}_t,\; \hat Y^{\pi^*}_t), \hat \Gamma_t^X=\hat g_x(t,S_t,X^{\pi^*}_t, \hat Y^{\pi^*}_t),\]
where $\hat z,\hat z_x,\hat g,\hat g_x,\hat u$ optimize $\mathcal H^\phi$.
\begin{theorem}[Verification Theorem - bounded case]\label{thm:verif}
Assume that there exists a function $\phi$ twice continuously differentiable in space and differentiable in time, such that $\phi(t,s,x,y)$ solves \textbf{(bHJB)}. Furthermore, assume that $\phi$ has a quadratic growth in $y$ and polynomial growth in $s,x$ such that 
\[|\phi(t,s,x,y)|\leq \kappa(1+|x|^p+\|s\|^p+|y|^2),\; p>1,\; \kappa>0.\]

Then, for each $t \in [0,T]$, the strategy $(\hat{Z}^X, \hat{Z}, \hat{\Gamma}^X, \hat{\Gamma}, \hat{U})$ is an optimal strategy for the control problem 

\begin{align*}
    \phi(0,S_0,x,0)=V_0 =  \sup_{(Z^X, Z, \Gamma^X, \Gamma, U)} \mathbb{E}\left[U^P(X_{T \wedge \tau}^{\pi^*} - \hat Y^{\pi^*}_{T \wedge \tau})\right].
\end{align*}

The optimal contract is given by 
\begin{align*}
    \xi^\star 
    &= Y_0+ \int_{0}^{\T} \sum_{i=1}^m \frac{\hat{Z}^i_t}{S^i_t} dS^i_t + \int_0^{\T} \hat{Z}_t^X dX_t + \int_{0}^{\T} \hat{U}_s dH_s \\
    &+ \frac{1}{2}\int_0^{\T} (\hat{\Gamma}_t^X + \eta (\hat{Z}_t^X)^2 ) d\langle X, X\rangle_t - \int_0^{\T} F(t, \hat{Z}_t, \hat{Z}_t^X, \hat{\Gamma}_t, \hat{\Gamma}_t^X,U_t) dt\\
    &+\int_0^{\T} \sum_{i=1}^m \frac{\hat{\Gamma}^i_s}{S^i_t} d\langle S^i, X\rangle_t 
\end{align*}
\end{theorem}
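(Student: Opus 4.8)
The plan is to establish the standard two-part verification argument: first show that $\phi(0,S_0,x,0)$ dominates the value $\hat V_0$ for any admissible control, and then show that the candidate control $(\hat Z,\hat Z^X,\hat\Gamma,\hat\Gamma^X,\hat U)$ attains this bound with equality. The key device is to apply the Itô formula for semimartingales with jumps to the process $t\mapsto\phi(t,S_t,X^{\pi^*}_t,\hat Y_t)$ along the coupled dynamics $(SDE)$, using the fact (noted in the agent's proof and following \cite{privault2022introduction,ikeda2014stochastic}) that the jump part of $H$ has compensator $\lambda_t\,dt$.

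First I would fix an arbitrary admissible control $(Z,Z^X,\Gamma,\Gamma^X,U)\in\mathcal U$ and apply Itô's formula to $\phi$ between $0$ and $T\wedge\tau$. The drift terms assemble exactly into the operator $\mathcal H^\phi$ evaluated at the current control, while the jump term $\lambda_t\big(\phi(t,s,x,y+u)-\phi(t,s,x,y)\big)$ arises from compensating the $dH_t$ increments; the remaining $dW_t$ and $dM_t$ pieces are local martingales. Using the \textbf{(bHJB)} equation, the drift of $\phi$ plus $(x-y)f(t)$ is bounded above by $-\partial_t\phi$ combined with the supremum, giving the pointwise inequality that the integrand is $\le 0$ for every control and $=0$ at the optimizers $\hat z,\hat z_x,\hat g,\hat g_x,\hat u$. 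Integrating, taking expectations (killing the martingale parts), and invoking the terminal condition $\phi(T,\cdot)=0$ together with the identity $\hat V_0=\E\big[\int_0^T (X_t^{\pi^*}-\hat Y_t)f(t)\,dt\big]$ from the \cite{blanchet2008optimal} reduction, I obtain $\phi(0,S_0,x,0)\ge \E[\int_0^T(X_t^{\pi^*}-\hat Y_t)f(t)dt]$, with equality under the candidate control. The optimal contract $\xi^\star$ is then read off by plugging the feedback forms $\hat Z_t=\hat z(t,S_t,X^{\pi^*}_t,\hat Y_t)$, etc., back into the definition of $Y_{T\wedge\tau}$.

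The main obstacle I anticipate is the integrability and localization needed to justify that the local-martingale parts vanish in expectation and that Itô's formula applies despite the stochastic horizon $T\wedge\tau$. This is precisely where the growth hypothesis enters: the assumed bound $|\phi(t,s,x,y)|\le\kappa(1+|x|^p+\|s\|^p+|y|^2)$, combined with the exponential-moment and $L^2$ integrability built into the definition of $\mathcal U$ (and the boundedness of $b,\sigma,\theta$), must be shown to control the quadratic-variation terms of the stochastic integrals so that a standard localizing sequence of stopping times $\rho_n\uparrow T\wedge\tau$ yields uniformly integrable families, permitting passage to the limit by dominated convergence. A subtlety specific to the bounded case is that $\E[\Lambda_T]=\infty$ under Hypothesis B, so one must verify the compensator term $\int_0^{T\wedge\tau}\lambda_t(\cdots)dt$ remains integrable; here the exponential control on $U$ in $\mathcal U$ and the restriction of the compensator to $[0,\tau]$ (recall $\lambda^{\G}_t=\lambda_t\ind_{t\le\tau}$) are what keep the jump integral finite.

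Finally, I would note that measurable selection of the optimizers $(\hat z,\hat z_x,\hat g,\hat g_x,\hat u)$ must be addressed so that the feedback controls are genuinely $\mathbb G$-predictable and lie in $\mathcal U$; this follows from continuity of $\mathcal H^\phi$ in the control variables together with a standard measurable-selection theorem, but it should be stated explicitly to close the argument.
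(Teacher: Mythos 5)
Your proposal is correct and follows essentially the same route as the paper: It\^o's formula applied to $\phi$ along the coupled dynamics, the \textbf{(bHJB)} inequality giving nonpositivity of the drift with equality at the optimizers, and a localization argument closed by the polynomial/quadratic growth bound and dominated convergence. The one concern you flag about the exploding compensator under Hypothesis B is handled in the paper exactly as you anticipate, by including $\lambda_t$ among the localized quantities in the stopping times $\tau_n$.
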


\begin{remark}
  If we assume that $\sigma(t,s)=\sigma\in \mathbb R^{m\times d}$ and $b(t,s)=b\in\mathbb R^m$, we note that the solution $v$ to the HJB equations \textbf{(bHJB)} does not depend on $s$. It can thus be rewritten as \[
\textbf{(bHJB)}\quad\begin{cases}
  \partial_t v(t,x,y) + U^P(x - y) f(t) + \sup_{(z^x, z, g^x, g, u)}\bigl \{   \mathcal H^v(t, x, y, \nabla v, \Delta v, z^x, z, g^x, g, u) \bigl \}=0,\; t<T\\
v(T,x,y)=0,\; (x,y)\in\mathbb R\times \mathbb R,
\end{cases}
\]
where

\begin{align*}
&\mathcal H^\phi(t,x,y,z^x,z,g^x,g,u):= \phi_x \pi^*(z,z^x,g,g^x)\sigma \theta + \frac{1}{2}\phi_{xx}||\pi^*(z,z^x,g,g^x) \sigma||^2\\
& + \phi_y \Big[z^T b + z^x \pi^*(z,z^x,g,g^x) \sigma \theta
+ \frac{1}{2}||\pi^*(z,z^x,g,g^x)\sigma||^2(g^x + \eta |z^x|^2) \\
&\qquad-  F(t, z, z^x, g, g^x,u) + \pi^*(z,z^x,g,g^x) \sigma \sigma^T g\Big]
   \\
&+ \frac{1}{2}\phi_{yy} (||z^T\sigma||^2 + z^x|| \pi^*(z,z^x,g,g^x)\sigma||^2) \\
&+ \phi_{xy} (z^x ||\pi^*(z,z^x,g,g^x) \sigma||^2  + \pi^*(z,z^x,g,g^x)\sigma\sigma^T  z)\\
&+\lambda_t(\phi(t, x, y+u) - \phi(t, x, y)).
\end{align*}
\end{remark}

\begin{proof}
The proof of the theorem relies on a localization procedure. We first assume that there exists a solution $\phi$ to \textbf{(bHJB)} satisfying \[|\phi(t,s,x,y)|\leq \kappa(1+|x|^p+\|s\|^p+|y|^2),\; p>1,\; \kappa>0.\]  Let $(Z,Z^X,\Gamma,\Gamma^X,U)\in \mathcal U$ We introduce the following stopping time
\[
    \tau_n := \inf\{ t > 0, (S_t, X^{\pi^*}_t, \hat Y^{\pi^*}_t, Z^X,\lambda_t) \in \mathcal{B}_n(0)\} \wedge T
\]
where $\mathcal{B}_n(0)$ is the ball centered at the origin with radius $n$ in $\mathbb{R}^{m+4}$.
Applying Ito's formula, we get
\begin{align*}
    \phi(\tau_n, S_{\tau_n}, X^{\pi^*}_{\tau_n}, \hat Y^{\pi^*}_{\tau_n}) 
    &= \phi(0, S_{0}, x, 0) + \int_{0}^{\tau_n} \mathcal H^\phi(r,S_r,X^{\pi^*}_r,\hat Y^{\pi^*}_r,Z^X_r,Z_r,\Gamma^X_r,\Gamma_r,U_r) dr \\
    &+\int_{0}^{\tau_n} \sum_{i=1}^m \partial_{s^i} \phi(r,S_r,X^{\pi^*}_r, \hat Y^{\pi^*}_r) S_r^i \sigma^i_r dW_r\\
    &+\int_{0}^{\tau_n} \partial_{x} \phi(r,S_r,X^{\pi^*}_r, \hat Y^{\pi^*}_r) \pi_r^* \sigma_r dW_r\\
    &+\int_{0}^{\tau_n} \partial_{y} \phi(r,S_r,X^{\pi^*}_r, \hat Y^{\pi^*}_r) (\sum_{i=1}^m Z^i_r \sigma^i_r +Z_r^X\pi_r^* \sigma_r )dW_r\\
    &+\int_{0}^{\tau_n} \lambda_r(\phi(r,S_r,X^{\pi^*}_r, \hat Y^{\pi^*}_r + U_r) - \phi(r,S_r,X^{\pi^*}_r,\hat Y^{\pi^*}_r)) dM_r.
\end{align*}
By the localization procedure and considering the expectations, we get
\begin{align*}
    &\mathbb{E}[\phi(\tau_n, S_{\tau_n}, X^{\pi^*}_{\tau_n}, \hat Y^{\pi^*}_{\tau_n})]\\
    &= \phi(0, S_{0}, x, 0)- \mathbb{E}[\int_{0}^{\tau_n}U^P(X^{\pi^*}_r - \hat Y^{\pi^*}_r)f(r) dr]\\
    &+\mathbb{E}[\int_{0}^{\tau_n} \mathcal (\mathcal H^\phi(r,S_r,X^{\pi^*}_r,\hat Y^{\pi^*}_r,Z^X_r,Z_r,\Gamma^X_r,\Gamma_r,U_r) + \partial_t \phi(r,S_r,X^{\pi^*}_r,\hat Y^{\pi^*}_r) + U^P(X^{\pi^*}_r -  \hat Y^{\pi^*}_r)f(r) )dr].
\end{align*}
Since $\phi$ satisfies \textbf{(bHJB)}, 
\begin{align*}
    \mathbb{E}[\phi(\tau_n, S_{\tau_n}, X^{\pi^*}_{\tau_n}, \hat Y^{\pi^*}_{\tau_n})+\int_{0}^{\tau_n}U^P(X^{\pi^*}_r -  \hat Y^{\pi^*}_r)f(r) dr] 
    &\leq \phi(0, S_{0}, x, 0) \quad \forall (Z^X,Z,\Gamma^X,\Gamma,U)\in \mathcal U
\end{align*}
where the equality holds for $(\hat{Z}^X,\hat{Z},\hat{\Gamma}^x,\hat{\Gamma},\hat{U})$.Recall that $\phi$ is bounded by a polynomial function in the variables $(s, x, y)$. We note that \[
    |\phi(\tau_n, S_{\tau_n}, X^{\pi^*}_{\tau_n}, \hat Y^{\pi^*}_{\tau_n})| \leq C(1 + |S_{\tau_n}|^p + |X^{\pi^*}_{\tau_n}|^p + | \hat Y^{\pi^*}_{\tau_n}|^2)
    \leq C(1 + \sup_{t} \{|S_t|^p + |X^{\pi^*}_{t}|^p + | \hat Y^{\pi^*}_{t}|^2 \}).
\]By applying the dominated convergence theorem, we deduce that
\[
   \mathbb{E}[\phi(T, S_{T}, X^{\pi^*}_{T}, \hat Y^{\pi^*}_{T})+\int_{0}^{T}U^P(X^{\pi^*}_r -  \hat Y^{\pi^*}_r)f(r) dr]  \leq  \phi(0, S_0, x,0),
\]
with equality when $ (Z^X,Z,\Gamma^X,\Gamma,U)$ = $(\hat{Z}^X,\hat{Z},\hat{\Gamma}^X,\hat{\Gamma},\hat{U})$.

\end{proof}

\paragraph{Viscosity solution and dynamic programming.} Assuming the existence of a $\mathcal C^{1,2,2,2}$ solution to \textbf{(bHJB)} can be relaxed in Theorem \eqref{thm:verif} by showing that the value function of the problem is a solution in the sense of viscosity of the PDE \textbf{(bHJB)}. This relaxation of regularity for solution to PDEs has been developed by Lions and Crandall in \cite{crandall1983viscosity,crandall1992user}. We refer to  \cite{fleming2006controlled,touzi2012optimal} for more details about stochastic control with viscosity solutions. We start by recalling the dynamic programming principle and we define the continuation value objective of the principal for any control $\nu:=(Z,Z^X,\Gamma,\Gamma^X,U)$, by

\begin{equation*}
  v(t,s,x,y,\nu)=  \E\bigl[ \int_{t}^T U^P(X^{t,x,\pi^*}_r - Y^{t,y,\pi^*}_r) f(r)dr \bigl],
\end{equation*}
so that $V(t,s,x,y)=\sup_\nu v(t,s,x,y,\nu),$
where $X^{t,x,\pi^*}$ and $Y^{t,y,\pi^*}$ denote the flow processes starting at time $t$ with respective initial values $x$ and $y$ in (SDE). The dynamic programming principle states that
for any stopping time $\theta\in [t,T]$ where $t<T$ we have
\begin{equation}\label{DPP}
  V(t,s,x,y)= \sup_{(Z,Z^X,\Gamma,\Gamma^X,U)\in \mathcal U_{[t,\theta]}} \E\bigl[ V(\theta, S_\theta^{t,s},X_\theta^{t,x,\pi^*},Y_\theta^{t,y}) + \int_{t}^\theta U^P(X^{t,x,\pi^*}_r - {Y}^{t,y}_r) f(r)dr \bigl].
\end{equation}

The notion of weak solutions of \textbf{(bHJB)} results from this dynamic programming principle and is defined as follows.

\begin{definition}[Viscosity solution]\label{viscosity:def} We say that $v$ is a lower (resp. upper) semi-continuous super-solution (resp. sub-solution) of \textbf{(bHJB)} on $[0,T)\times \mathbb R^m\times \mathbb R\times \mathbb R$ if for all functions $\phi\in \mathcal C^{1,2,2,2}$ and $(\hat t,\hat s,\hat x,\hat y)\in[0,T)\times \mathbb R^m\times \mathbb R\times \mathbb R$ satisfying 
\[0=\min_{(t,s,x,y)\in [0,T)\times \mathbb R^m\times \mathbb R\times \mathbb R} (v-\phi)(t,s,x,y)=(v-\phi)(\hat t,\hat s,\hat x,\hat y),\]
\[\text{resp. } 0=\max_{(t,s,x,y)\in [0,T)\times \mathbb R^m\times \mathbb R\times \mathbb R} (v-\phi)(t,s,x,y)=(v-\phi)(\hat t,\hat s,\hat x,\hat y),\]
    we have
    \[-\partial_t v(\hat t,\hat s,\hat x,\hat y) - (\hat x - \hat y) f(\hat t) -\sup_{(z^x, z, g^x, g, u)}    \mathcal H^v(\hat t,\hat s,\hat x,\hat y, \nabla v, \Delta v, z^x, z, g^x, g, u) \geq 0\; (\text{resp. }\leq 0).\]
    If $v$ is both a super-solution and a sub-solution, we say that $v$ is a viscosity solution to \textbf{(bHJB)}.
\end{definition}
As a consequence of the dynamic programming principle, we have the following theorem
\begin{theorem}\label{thm:visco}
    Assume that the value function $V$ is locally bounded on $[0,T)\times \mathbb R^m\times\mathbb R\times \mathbb R$. Then $V$ is a viscosity solution to \textbf{(bHJB)}.
\end{theorem}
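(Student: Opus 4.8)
The plan is to derive the two one-sided viscosity inequalities of Definition~\ref{viscosity:def} separately, obtaining each directly from the dynamic programming principle \eqref{DPP} together with It\^o's formula applied to smooth test functions along the controlled jump-diffusion $(S,X^{\pi^*},\hat Y)$ of (SDE). Since $V$ is assumed only locally bounded, I would carry out the argument rigorously on the semicontinuous envelopes, showing that the lower envelope $V_*$ is a supersolution and the upper envelope $V^*$ a subsolution; when $V$ is continuous these coincide and give that $V$ itself is a viscosity solution. Throughout, a localization via the exit time $\rho_n$ of a ball around the base point keeps all coefficients, the controlled dynamics, and the jump increments of $\phi$ bounded, so that every stochastic integral (including the compensated default integral against $dM$) is a genuine martingale on the localized interval and drops out in expectation.

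For the supersolution property, let $\phi\in\mathcal C^{1,2,2,2}$ touch $V_*$ from below at $(\hat t,\hat s,\hat x,\hat y)$, with $(V_*-\phi)$ attaining a zero minimum there. I would freeze an arbitrary constant control value $(z,z^x,g,g^x,u)$ on $[\hat t,\hat t+h]$ and apply \eqref{DPP} with the stopping time $\theta=(\hat t+h)\wedge\rho_n$. Using the single-control inequality $V(\hat t)\ge\E[V(\theta)+\int_{\hat t}^{\theta}(X^{\pi^*}_r-\hat Y_r)f(r)\,dr]$, the touching relations $V_*\ge\phi$ and $V_*=\phi$ at the base point, and It\^o's formula to rewrite $\E[\phi(\theta,\dots)]=\phi(\hat t,\dots)+\E[\int_{\hat t}^{\theta}(\partial_t\phi+\mathcal H^\phi)\,dr]$, one is left with $0\ge\E[\int_{\hat t}^{\theta}(\partial_t\phi+\mathcal H^\phi+(X^{\pi^*}_r-\hat Y_r)f(r))\,dr]$. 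Dividing by $h$ and letting $h\downarrow 0$, continuity of the integrand and dominated convergence give $\partial_t\phi(\hat t,\dots)+(\hat x-\hat y)f(\hat t)+\mathcal H^\phi(\hat t,\dots)\le 0$ for the chosen control; as this holds for every constant control, taking the supremum yields exactly the supersolution inequality.

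The subsolution property I would establish by contradiction. Suppose $\phi\in\mathcal C^{1,2,2,2}$ touches $V^*$ from above (a zero maximum of $V^*-\phi$) and that the inequality fails, i.e. $\partial_t\phi+(\hat x-\hat y)f(\hat t)+\sup_\nu\mathcal H^\phi<0$ strictly at the base point. By continuity of the data this persists as $\partial_t\phi+(x-y)f+\mathcal H^\phi(\nu)<-\delta$ on a neighborhood of the base point, \emph{uniformly over all} control values $\nu$. Applying It\^o's formula to $\phi$ along the localized trajectory for an arbitrary control then forces $\E[\phi(\theta,\dots)+\int_{\hat t}^{\theta}(X^{\pi^*}_r-\hat Y_r)f(r)\,dr]\le\phi(\hat t,\dots)-\delta\,\E[\theta-\hat t]<\phi(\hat t,\dots)$. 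Taking the supremum over controls, using $V^*\le\phi$ and the dynamic programming identity $V^*(\hat t,\dots)=\phi(\hat t,\dots)$, yields $\phi(\hat t,\dots)<\phi(\hat t,\dots)$, a contradiction; hence the subsolution inequality must hold.

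The main obstacle I anticipate is the nonlocal default term $\lambda_t\big(\phi(t,s,x,y+u)-\phi(t,s,x,y)\big)$: because it evaluates the solution at the post-jump point $y+u$, a purely local touching of $\phi$ and $V$ does not control its value there, so I would adopt the standard convention for integro-differential viscosity solutions in which this controlled shift is read off $V$ itself rather than the local test function, consistent with the $\mathcal H^V$ notation of Definition~\ref{viscosity:def}. The two subsidiary technical points are transferring the dynamic programming principle \eqref{DPP} to the envelopes $V_*,V^*$ (handled by the usual approximation of the indicator of the touching set), and verifying that the compensated jump integral is a true martingale on $[\hat t,\theta]$, which follows from the localization since $\phi$ is bounded on compacts and $\lambda$, $\hat Y$ are bounded there. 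The remaining ingredients — the It\^o expansion, the localization estimates, and the $h\downarrow 0$ passage — are routine once the nonlocal term is treated.
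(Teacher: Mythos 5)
Your proposal follows essentially the same route as the paper's proof: both establish the supersolution property for the lower envelope $V_*$ by applying the dynamic programming principle with a frozen constant control, It\^o's formula, and a localization that also truncates the exploding intensity $\lambda$, and both obtain the subsolution property for the upper envelope $V^*$ by contradiction from a strict inequality holding, uniformly in the control, on a small ball around the touching point. The only cosmetic difference is your convention of reading the nonlocal jump increment off $V$ rather than off the test function $\phi$ as the paper does; this does not alter the structure of the argument.
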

\begin{remark}
    Note that the infinitesimal generator $\mathcal H$ contains a degenerate term through the $\lambda$ process. This term explodes when $t$ approaches $T$ in the case that the support of $\tau$ is bounded and requires to prove Theorem \ref{thm:visco} with a localization technique for both the state variables of the problem and the $\lambda$ process.
\end{remark}

\begin{proof}[Proof of Theorem \ref{thm:visco}] We denote by $V_*,V^*$ the lower semi-continuous and upper semi-continuous envelopes of $V$ respectively.\\

    \textit{Step 1. Proof of the super-solution property.} Let $\phi$ be a $\mathcal C^{1,2,2,2}$ function. Let $(\hat t,\hat s,\hat x,\hat y)\in [0,T)\times \mathbb R^m\times \mathbb R\times \mathbb R$ be such that 
    \[0=\min (V_*-\phi)(t,s,x,y)=(V_*-\phi)(\hat t,\hat s,\hat x,\hat y)\] and a sequence $(t_n,s_n,x_n,y_n)\in [0,T]\times \mathbb R^m\times \mathbb R\times \mathbb R$ such that 
    \[ (t_n,s_n,x_n,y_n)\longrightarrow  (\hat t,\hat s,\hat x,\hat y),\] with 
    $\lim\limits V(t_n,s_n,x_n,y_n) =V_*(\hat t,\hat s,\hat x,\hat y)$. We define $\varepsilon_n:= V(t_n,s_n,x_n,y_n)-\phi(t_n,s_n,x_n,y_n)\geq 0$. Note that $\varepsilon_n\longrightarrow 0$ for large $n$. We also define $X^n:=X^{t_n,x_n,\pi^*},Y^n:=Y^{t_n,y_n,\pi^*}$ the solution to (SDE) starting at time $t_n$ with respective values $x_n$ and $y_n$ and controlled by $\alpha\in \mathcal U$ such that $\alpha=\nu\in \mathcal U_{\hat t}$, so that $\nu$ is a constant at time $t=\hat t$, and $S^n$ denotes the price process starting at the price vector $s_n$ at time $t_n$. In other words, $S^n_{t_n}=s_n, X^n_{t_n}=x_n$ and $Y^n_{t_n}=y_n$. We also define 
    \[\delta_n:= \sqrt{\varepsilon_n}\mathbf 1_{\varepsilon_n\neq 0} + \frac1n \mathbf 1_{\varepsilon_n= 0},\]
and the stopping time

\[\theta_n=\inf\{t>t_n,\; (t,S^n_t,X^n_t,Y^n_t,\lambda_t)\notin [0,t_n+\delta_n)\times \mathcal B_n\times [0,n]\},\]
where $\mathcal B_n$ is the unit ball on $\mathbb R^m\times\mathbb R\times \mathbb R$ centered at the point $(s_n,x_n,y_n)$. We note that $\theta_n\longrightarrow \hat t$ when $n$ goes to $\infty$. According to the dynamic programming principle \eqref{DPP} we have 

\[ V(t_n,s_n,x_n,y_n)\geq  \E\bigl[ V(\theta_n, S_{\theta_n}^{n},X_{\theta_n}^{n},Y_{\theta_n}^{n}) + \int_{t_n}^{\theta_n}U^P(X_{r}^{n} - Y^{n}_r) f(r)dr \bigl]
\]
or equivalently

\begin{align*}
0&\leq  \E\bigl[V(t_n,s_n,x_n,y_n)- V(\theta_n, S_{\theta_n}^{n},X_{\theta_n}^{n},Y_{\theta_n}^{n}) - \int_{t_n}^{\theta_n}U^P(X_{r}^{n} - Y^{n}_r) f(r)dr \bigl]\\
&\leq \varepsilon_n+ \E\bigl[\phi(t_n,s_n,x_n,y_n)- \phi(\theta_n, S_{\theta_n}^{n},X_{\theta_n}^{n},Y_{\theta_n}^{n}) - \int_{t_n}^{\theta_n}U^P(X_{r}^{n} - Y^{n}_r) f(r)dr \bigl]
\end{align*}

Applying Ito's formula for the function $\phi$ and by the localization procedure before the stopping time $\theta_n$ we get
\begin{align*}
0&\leq \varepsilon_n+ \E\bigl[\int_{t_n}^{\theta_n}\big( -\partial_t \phi(r,S_r^n,X_r^n,Y_r^n)- \mathcal H^\phi(r,S_r^n,X_r^n,Y_r^n,\nu)-U^P(X_{r}^{n} - Y^{n}_r) f(r)\big)dr \bigl]\\
&\leq \frac{\varepsilon_n}{\delta_n}+ \E\bigl[\frac1{\delta_n}\int_{t_n}^{\theta_n}\big( -\partial_t \phi(r,S_r^n,X_r^n,Y_r^n)- \mathcal H^\phi(r,S_r^n,X_r^n,Y_r^n,\nu)-(X_{r}^{n} - Y^{n}_r) f(r)\big)dr \bigl]
\end{align*}
Since the function $\phi$ is assumed to be $\mathcal C^{1,2,2,2}$, $\phi$ and its derivative are essentially locally bounded around $\mathcal B_n$ and uniformly in $n$ for $t<T$. Taking the limit as $n$ goes to $\infty$, we obtain 

\[-\partial_t \phi(\hat t,\hat s,\hat x,\hat y)- \mathcal H^\phi(\hat t,\hat s,\hat x,\hat y,\nu)-(\hat x - \hat y) f(\hat t)\geq 0,\text{ for any }\nu\in \mathbb R^{m}\times\mathbb R\times \mathbb R^m\times \mathbb R.\]
We deduce that $V$ is a viscosity super-solution in the sense of Definition \ref{viscosity:def} to \textbf{(bHJB)}.\\

        \textit{Step 2. Proof of the sub-solution property.} Let $\phi$ be a $\mathcal C^{1,2,2,2}$ function. Let $(\hat t,\hat s,\hat x,\hat y)\in [0,T)\times \mathbb R^m\times \mathbb R\times \mathbb R$ be such that 
    \[0=\max (V^*-\phi)(t,s,x,y)=(V^*-\phi)(\hat t,\hat s,\hat x,\hat y).\] Assume by contradiction that 
    \begin{equation*}\label{contradiction} -\partial_t \phi(\hat t,\hat s,\hat x,\hat y)-\sup_\nu \mathcal H^\phi(\hat t,\hat s,\hat x,\hat y,\nu)-(\hat x - \hat y) f(\hat t)>0\end{equation*}
    Since the function $\mathcal H^\phi$ is continuous, we deduce that there exists a ball $\mathcal B_\varepsilon$ around $(\hat t,\hat s,\hat x,\hat y)$ with norm $\varepsilon>0$ small enough such that 
    \[ -\partial_t \phi(t,s,x,y)- \sup_\nu \mathcal H^\phi(t,s,x,y,\nu)-(x-y) f(t)<0,\quad (t,s,x,y)\in \mathcal B_\varepsilon.\]
    Note that there exists $\eta>0$ independent of $\nu$ such that \[-2\eta:=\max_{(t,s,x,y)\in \mathcal B_\varepsilon}V^*(t,s,x,y)-\phi(t,s,x,y).\]
    Let $(t_n,s_n,x_n,y_n)\in \mathcal B_\varepsilon$ converge to $(\hat t,\hat s,\hat x,\hat y)$ such that $\lim\limits V(t_n,s_n,x_n,y_n)=V^*(\hat t,\hat s,\hat x,\hat y)$ and $-\eta<(V-\phi)(t_n,s_n,x_n,y_n)<0$ for any $n\geq 1$. We define 
    \[\theta_n=\inf\{t>t_n,\; (t,S^n_t,X^n_t,Y^n_t,\lambda_t)\notin \mathcal B_\varepsilon\times [0,n]\}.\]
Applying Ito's formula, we get for any control $\nu$
\begin{align*}
    V(t_n,s_n,x_n,y_n)&\geq -\eta+\phi(t_n,s_n,x_n,y_n)\\
    &=-\eta +\mathbb E\big[\phi(\theta_n,S_{\theta^n}^n,X_{\theta_n}^n,Y_{\theta^n}^n) -\int_{t_n}^{\theta_n}[\partial_t \phi(r,S_r^n,X_r^n,Y_r^n)+\mathcal H^{\phi}(r,S_r^n,X_r^n,Y_r^n,\nu_r)] dr\big]\\
    &\geq -\eta +\mathbb E\big[\phi(\theta_n,S_{\theta^n}^n,X_{\theta_n}^n,Y_{\theta^n}^n) +\int_{t_n}^{\theta_n}U^P(X_r^n-Y_r^n)f(r) dr\big]\\
    &\geq \eta+\mathbb E\big[V^*(\theta_n,S_{\theta^n}^n,X_{\theta_n}^n,Y_{\theta^n}^n) +\int_{t_n}^{\theta_n}U^P(X_r^n-Y_r^n)f(r) dr\big].
\end{align*}
 Since $\eta$ is independent of the control $\nu$, it contradicts the dynamic programming principle \eqref{DPP}. We thus deduce that 
 \[ -\partial_t \phi(\hat t,\hat s,\hat x,\hat y)-\sup_\nu \mathcal H^\phi(\hat t,\hat s,\hat x,\hat y,\nu)-(\hat x - \hat y) f(\hat t)\leq 0,\]
 hence $V$ is a sub-solution in the sense of Definition \ref{viscosity:def} to \textbf{(bHJB)}.
\end{proof}

\subsubsection{Unbounded default time}
Even in the unbounded case, we start by getting rid of the $\hat{Y}_0$ we have in \eqref{eq:preHJB} as it is useless for the optimization routine: again $\hat{Y}$ and $Y$ only differ by a constant, so they have the same governing SDE. Therefore, our starting point is the following optimization problem
\[
    \sup_{(Z, Z^X, \Gamma, \Gamma^X, U)} \mathbb{E}[U^P(X_{\T}^{\pi^*} - \hat Y^{\pi^*}_{\T})]\\
\]

The first difference is that now, in reformulating the problem using the default density, we will still have a terminal part $(1 - F_{\tau}(T))(X_T - Y_T)$. Overall, our problem is now 
\begin{equation*}
    \E\bigl[ \int_{0}^T U^P(X_t^{\pi^*} -  \hat Y^{\pi^*}_t) f(t)dt + (1 - F_{\tau}(T))U^P(X_T -  \hat Y^{\pi^*}_T)\bigl]
\end{equation*}

Note that this problem differs from the bounded default time case only from the terminal condition given by
    \[
        v(T, x, y) = (1 - F_{\tau}(T))U^P(x - y)
    \]

We introduce the following integro-partial PDE and the verification theorem follows.

\[
\textbf{(uHJB)}\quad\begin{cases}
   \partial_t v(t,s,x,y) + U^P(x - y) f(t) + \sup_{(z^x, z, g^x, g, u)}\bigl \{   \mathcal H^v(t,s, x, y, \nabla v, \Delta v, z^x, z, g^x, g, u) \bigl \}=0,\; t<T\\
v(T,s,x,y)= (1 - F_{\tau}(T))U^P(x - y).
\end{cases}
\]

\begin{theorem}[Verification Theorem - unbounded case]
Assume that there exists a function $\phi$ twice continuously differentiable in space and differentiable in time, such that $\phi(t,s,x,y)$ solves  \textbf{(uHJB)}. We denote by $(\hat Z_t, \hat Z_t^X, \hat \Gamma_t, \hat \Gamma_t^X, \hat U_t)$ the optimizers in the supremum. 
Furthermore, assume that $\phi$ has a quadratic growth in $y$ and polynomial growth in $s,x$ such that 
\[|\phi(t,s,x,y)|\leq \kappa(1+|x|^p+\|s\|^p+|y|^2),\; p>1,\; \kappa>0.\]

Then, for each $t \in [0,T]$, the strategy $(\hat{Z}^X, \hat{Z}, \hat{\Gamma}^X, \hat{\Gamma}, \hat{U})$ is an optimal strategy for the control problem and 

\begin{align*}
    \phi(0,S_0,x,0)=V_0=  \sup_{(Z^X, Z, \Gamma^X, \Gamma, U)} \mathbb{E}\left[X_{T \wedge \tau}^{\pi^*} -  \hat Y^{\pi^*}_{T \wedge \tau}\right].
\end{align*}

The optimal contract is given by 
\begin{align*}
    \xi^\star 
    &= \hat R_0+ \int_{0}^{\T} \sum_{i=1}^m \frac{\hat{Z}^i_t}{S^i_t} dS^i_t + \int_0^{\T} \hat{Z}_t^X dX_t + \int_{0}^{\T} \hat{U}_s dH_s \\
    &+ \frac{1}{2}\int_0^{\T} (\hat{\Gamma}_t^X + \eta (\hat{Z}_t^X)^2 ) d\langle X, X\rangle_t - \int_0^{\T} F(t, \hat{Z}_t, \hat{Z}_t^X, \hat{\Gamma}_t, \hat{\Gamma}_t^X,U_t) dt\\
    &+\int_0^{\T} \sum_{i=1}^m \frac{\hat{\Gamma}^i_s}{S^i_t} d\langle S^i, X\rangle_t 
\end{align*}
\end{theorem}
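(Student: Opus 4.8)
The plan is to reproduce the localization argument used in the proof of Theorem~\ref{thm:verif}, the only structural change being the non-trivial terminal data $\phi(T,s,x,y)=(1-F_\tau(T))(x-y)$ inherited from the reformulation of the objective. First I would record the probabilistic reformulation: using the density of $\tau$ and the decomposition of \cite{blanchet2008optimal}, the risk-neutral payoff splits as
\[
\E\bigl[X^{\pi^*}_{\T}-\hat Y_{\T}\bigr] = \E\Big[\int_0^T (X_t^{\pi^*}-\hat Y_t) f(t)\,dt + (1-F_\tau(T))(X_T-\hat Y_T)\Big],
\]
where, in contrast with the bounded case, the event $\{\tau>T\}$ has strictly positive probability under Hypothesis~A, so the terminal term $(1-F_\tau(T))(X_T-\hat Y_T)$ genuinely survives and dictates the terminal condition of \textbf{(uHJB)}.

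Next I would fix an admissible control $(Z,Z^X,\Gamma,\Gamma^X,U)\in\mathcal U$, introduce the same localizing stopping times $\tau_n=\inf\{t>0:(S_t,X_t^{\pi^*},\hat Y_t,Z_t^X,\lambda_t)\notin\mathcal B_n(0)\}\wedge T$, and apply Itô's formula to $\phi(t,S_t,X_t^{\pi^*},\hat Y_t)$ on $[0,\tau_n]$. The drift collects into $\partial_t\phi+\mathcal H^\phi$, while the Brownian and compensated-jump integrals are true martingales up to $\tau_n$ because their integrands are bounded on $\mathcal B_n$. Since $\phi$ solves \textbf{(uHJB)} one has $\partial_t\phi+(x-y)f+\mathcal H^\phi(\cdot,\nu)\le 0$ for every control $\nu$, with equality at the optimizers $(\hat Z,\hat Z^X,\hat\Gamma,\hat\Gamma^X,\hat U)$; taking expectations yields
\[
\E\Big[\phi(\tau_n,S_{\tau_n},X^{\pi^*}_{\tau_n},\hat Y_{\tau_n})+\int_0^{\tau_n}(X_r^{\pi^*}-\hat Y_r)f(r)\,dr\Big]\le \phi(0,S_0,x,0),
\]
with equality for the optimal feedback controls.

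Finally I would pass to the limit $n\to\infty$. Under Hypothesis~A the intensity $\lambda$ does not blow up as $t\to T$ (unlike the degenerate bounded case), so the localization is clean and $\tau_n\uparrow T$ almost surely. The growth bound $|\phi|\le\kappa(1+|x|^p+\|s\|^p+|y|^2)$, combined with the moment conditions built into $\mathcal U$—in particular $\E[\sup_{t}e^{\eta'|Y_t|}]<\infty$ together with the square-integrability of $X^{\pi^*}$ and of $S$—supplies a dominating random variable, so dominated convergence applies and the terminal condition substitutes $\phi(T,S_T,X_T,\hat Y_T)=(1-F_\tau(T))(X_T-\hat Y_T)$. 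Combining with the reformulation gives $\E[X^{\pi^*}_{\T}-\hat Y_{\T}]\le\phi(0,S_0,x,0)$ for every admissible control and equality at $(\hat Z,\hat Z^X,\hat\Gamma,\hat\Gamma^X,\hat U)$, so these feedback controls are optimal and $\phi(0,S_0,x,0)=\hat V_0$; recalling $V_0=\hat V_0-\hat Y_0$ yields the stated value identity, and the representation of $\xi^\star$ follows by inserting the optimizers into the definition of $\hat Y$.

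The main obstacle I expect is twofold: verifying that the feedback optimizers obtained pointwise from $\mathcal H^\phi$ actually define an admissible control in $\mathcal U$, so that the equality case is attained within the admissible class; and securing the uniform integrability required for the passage $\tau_n\to T$. The latter is delicate precisely because $\hat Y$ enters $\phi$ only quadratically, so the \emph{exponential} moment control available in $\mathcal U$ must be carefully leveraged to dominate $\phi(\tau_n,\cdots)$ uniformly in $n$; this, rather than the new terminal term (which grows only linearly and is therefore harmless), is where the analytic effort concentrates.
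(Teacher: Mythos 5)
Your proposal is correct and follows essentially the same route as the paper, which proves this theorem by invoking the localization/Itô/dominated-convergence argument of the bounded-case verification theorem, modified only by the nontrivial terminal condition $(1-F_\tau(T))(x-y)$; your observation that under Hypothesis A the intensity $\lambda$ does not degenerate near $T$ is precisely the simplification the paper records in its accompanying remark (so you could even drop $\lambda_t$ from the localizing stopping time). The caveats you flag at the end --- admissibility of the feedback optimizers and the uniform integrability needed to pass $\tau_n\to T$ --- are the right ones and are handled implicitly in the paper exactly as you describe.
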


\begin{remark}
    The proof of this theorem follows the same lines as the proof of Theorem \ref{thm:verif} without requiring the localization of the $\lambda$ term.
\end{remark}
\begin{remark}
    Similarly to Theorem \ref{thm:visco}, the value function of the problem is a viscosity solution to \textbf{(uHJB)}.
\end{remark}

\paragraph{A note on the uniqueness and numerical motivation.} In order to carry out numerical simulations, we must first address the uniqueness of viscosity solutions to the integro-partial differential equations \textbf{(bHJB)} and \textbf{(uHJB)}. In the integrable case, \textit{i.e.,} when $[0,T]$ is strictly contained in the support of $\tau$ so that $\lambda$ is integrable on $[0,T]$, uniqueness follows directly from existing results; see, for instance, \cite[Section~5]{seydel2009existence}. The bounded case, in which the support of $\tau$ is contained in $[0,T]$, is considerably more delicate. The loss of integrability of $\lambda$ as $t$ approaches $T$ introduces a degeneracy that, to the best of our knowledge, is not treated in the existing literature. We outline below a possible strategy to resolve this difficulty, but we leave a full analysis for future research, as it would require deep results in the theory of second-order BSDEs. Observe that the probabilistic formulation of \textbf{(bHJB)} corresponds to a second-order BSDE with jumps (see, for example, \cite[Section~3.4]{gennaro20252bsde}), but with a degenerate (singular) generator, similar to the one studied in \cite{jeanblanc2014note}. A first step would be to extend the analysis of \cite{jeanblanc2014note} to the second-order setting by studying the convergence of the solutions to the second-order BSDEs with truncated intensities
$\lambda^n := \lambda \wedge n,$
and by proving that these solutions converge to the solution of the 2BSDE with the original singular generator. Next, by connecting the solutions of the truncated 2BSDEs to the viscosity solutions of the corresponding non-degenerate IPDEs, a comparison theorem would be obtained for the viscosity solutions of the truncated IPDEs, using, for example, \cite{alvarez1996viscosity,mou2015uniqueness}. Finally, passing to the limit as $n \to \infty$ would yield a comparison theorem for viscosity solutions of the original IPDE with singular coefficients, thereby establishing the uniqueness of its viscosity solution.

\section{Numerical solutions}\label{chap5}

In this section, we study numerical solutions to the coupled HJB equations
\textbf{(bHJB)} and \textbf{(uHJB)}. Our goal is twofold. First, we want to contrast the bounded and unbounded default specifications and understand how they differ. Second, under a bounded default time, we are interested in extracting qualitative features of the optimal contract and trading policy. In particular, we aim to:
\begin{itemize}
    \item understand how the random default time affects both the optimal
    trading strategy and the incentive compensation scheme;
    \item investigate how the shape of the default distribution on $[0,T]$ impacts the optimal compensation;
    \item compare, in the bounded default case, the behaviour of the
    agent under the linear specification $\Xi^l$ and under the general functional $\Xi$;
    \item analyse how the different sources of risk (market risk and default
    risk) translate into distinct patterns for the various incentive
    components.
\end{itemize}
From a numerical point of view, the problem is intrinsically iterative. The maximizers $(\hat \pi,\hat Z^X,\hat \Gamma^X,\hat U)$ entering
\textbf{(bHJB)} and \textbf{(uHJB)} depend on the value function and its
derivatives, and in particular $\hat U$ is characterized as the argmax of a functional that explicitly involves both $v(t,x,y)$ and the shifted value $v(t,x,y+u)$. The coefficients of the PDE therefore depend nonlinearly on the
unknown solution itself, which rules out straightforward one–shot schemes.
To keep the numerical exploration as transparent as possible, we focus on the simpler $1-$dimensional case with a single risky asset. This allows us to isolate and better interpret the
dynamics of the default–linked incentive $U$. In this one–dimensional case the compensation for exposure to the traded asset $S$ is switched off, as it could be seen as a reward for the agent to invest more aggressively on the riskier assets and in the $1-$dimensional case it becomes essentially the same as incentivize on the portfolio growth. Therefore, the only diffusion incentive is the process $Z^X$ attached to the wealth process $X$. Both principal and agent are modeled with exponential utility, with possibly different risk–aversion coefficients. An additional objective of our numerical study is to understand how the optimal default–contingent incentive $U$ reacts to changes in the ratio between the principal’s and the agent’s risk aversion.

In the recent literature, neural networks showed great potential to be the state of the art for numerical schemes for PDEs, so the backbone of our algorithm will be a simple feed-forward neural network, that we will deploy in a deep learning scheme. Similar ideas have been previously developed in \cite{baldacci2019market}, where an actor--critic approach is used to find the optimal policy for market making together with the value function involving an integro-partial HJB equation in high dimension and in \cite{lu2025multiagent} in which the authors use reinforcement learning methods to solve the problem. Note that in our case we are using supervised learning for an integro-partial HJB equation which differs from the existing literature on stochastic control problems.

We recall that the value function $v$ solves the HJB equation
\begin{equation}\label{eq:HJB-numeric}
    \partial_t v(t,x,y)
    \;+\;
    U^P(x,y)\, f(t)
    \;+\;
    \sup_{(z^x,\,g^x,\,u)}
    \Big\{
        \mathcal{H}^v\bigl(
            t,x,y,\nabla v(t,x,y),\Delta v(t,x,y),
            z^x,g^x,u
        \bigr)
    \Big\}
    \;=\; 0,
    \qquad t < T,
\end{equation}
The running term $U^P(x,y) f_T(t)$ captures the principal's instantaneous utility weighted by the density $f_T$ of the default time, while the Hamiltonian $\mathcal{H}^v$ contains the drift, diffusion, and default terms, as well as the dependence on the controls $(z^x,g^x,u)$.\\ 
We equip this PDE with a terminal condition $v(T,x,y)=\Phi(x,y)$ and a boundary condition on the solvency boundary $x=0$, which in our specification reads $v(t,0,y) = -1$. The need for a boundary condition is given by the fact that we need to solve a PDE that lives in $[0, T] \times \R \times \R$, but practically we have to solve it in a bounded region. The economic interpretation is that if the portfolio goes to $0$ the principal does not pay the agent anything, creating a fixed value of $-1$. \\
The numerical algorithm proceeds by alternating between two steps. We start from an initial guess for the controls,
\[
    \bigl( {z_0^x}^*, {g_0^x}^*, u_0^* \bigr),
\]
and an initial, untrained neural network representing a value function $v_0$. For this fixed choice of $(z^x,g^x,u)$ we solve the HJB equation \eqref{eq:HJB-numeric} by training the neural network to obtain an updated value function $v_1$. Following the Physics--Informed Neural Networks (PINNs) approach \cite{raissi2017physics}, we use automatic differentiation to compute the first and second derivatives of $v_\theta$, construct a loss function as the squared residual of the PDE together with penalty terms enforcing the terminal and boundary conditions, and update the network parameters $\theta$ via stochastic gradient descent.\\
Once a new approximation $v_1$ is available, we can solve \emph{numerically} the pointwise optimization problem
\[
    \sup_{(z^x,\,g^x,\,u)}
    \Big\{
        \mathcal{H}^v\bigl(
            t,x,y,\nabla v_1(t,x,y),\Delta v_1(t,x,y),
            z^x,g^x,u
        \bigr)
    \Big\},
\]
thereby obtaining updated controls
$\bigl( {z_1^x}^*, {g_1^x}^*, u_1^* \bigr)$.
We then freeze these new controls and solve again the HJB equation via the PINN, producing $v_2$, and we iterate this actor--critic-style procedure until convergence of both the value function and the control processes. A high–level pseudo–code of the resulting alternating scheme is reported in
Algorithm 1 and Algorithm 2.
\vspace{0.5cm}

\begin{algorithm}[H]
\SetKwFunction{SolvePDENN}{SolvePDENN}
\SetKwFunction{SolveMax}{SolveMax}
\SetKwProg{Fn}{Function}{:}{}

\Fn{\SolvePDENN{$s_k, {z_k^x}^*, z^*_k, {g_k^x}^*, g^*_k, u^*_k, v_k$}}{
    \tcp{Step 1: Evaluate the PDE residual at the sampled points $s_k$ using current variables, and evaluate boundary and terminal loss}
    Compute PDE residual at $s_k$ using $v_k, {z_k^x}^*, {g_k^x}^*, u^*_k$\;
    
    \tcp{Step 2: One step of stochastic gradient descent for the neural network-based approximation of the PDE}
    Use a neural network or numerical solver to minimize the residual and update $v_{k+1}$\;
    
    \tcp{Step 3: Return the updated PDE solution}
    \Return $v_{k+1}$\;
}

\Fn{\SolveMax{$v_{k+1}$}}{
    \tcp{Step 1: Compute the current approximation of the value function $v_{k+1}$}
    Plug in $v_{k+1}$ into the function to be maximized\;
    
    \tcp{Step 2: Maximize over the variables}
    Solve for ${z_{k+1}^x}^*, {g_{k+1}^x}^*, u^*_{k+1}$ that maximize the objective function (grid approach)\;
    
    \tcp{Step 3: Return the maximized variables}
    \Return ${z_{k+1}^x}^*,  {g_{k+1}^x}^*, u^*_{k+1}$\;
}
\caption{Subroutines: SolvePDENN and SolveMax}
\end{algorithm}

\begin{algorithm}[H]
   \SetAlgoNlRelativeSize{0}
   \KwData{
      $v_0$: Initial guess for the PDE solution\\
      $({z_0^x}^*, {g_0^x}^*, u^*_0)$: Initial maximization variables\\
      Sampler: sampling of the variables\\
      PDE inputs: parameters of the PDE\\
      Tolerance $\epsilon$, max\_iter: Maximum iterations\\
      $k \leftarrow 0$: Iteration counter
   }
   \KwResult{PDE solution $v^*$}
   
   \While{$k \leq \text{max\_iter}$ \textbf{ and } \text{not converged}}{
      \tcp{Sample new points}
      $s_{k} \leftarrow$ sampler$()$\;
      
      \tcp{Solve the PDE with current guess}
      $v_{k+1} \leftarrow$ \SolvePDENN$(s_k, {z_{k}^x}^*, {g_{k}^x}^*, u^*_k, v_k)$\;
      
      \tcp{Maximize over variables}
      ${z_{k+1}^x}^*, {g_{k+1}^x}^*, u^*_{k+1} \leftarrow$ \SolveMax$(v_{k+1})$\;
      
      \tcp{Check for convergence}
      \If{$|v_{k+1} - v_k| \leq \epsilon$}{
         \tcp{If converged, exit loop}
         break\;
      }
      
      \tcp{Increment iteration counter}
      $k \leftarrow k + 1$\;
   }
   
   \caption{PDE-Maximizer Structure}
\end{algorithm}

\paragraph{Main numerical challenges and remedies.}
The solution of the coupled HJB system \textbf{(bHJB)}-\textbf{(uHJB)} presents several numerical difficulties, both at the level of the PDE and in the iterative optimization of the controls. A first challenge is that the neural network may converge prematurely to an almost constant $0$ function, given the nature of the exponential utility function: this is alleviated by the boundary condition and some pre-training on this condition to bring the initial solution away from $0$.
A second difficulty lies in the relative scaling of the PDE residual, boundary, and terminal losses. To mitigate this effect we employ self-aware weights: the loss has $3$ components, whose weights depend on the relative magnitude of the components, so that we inform the network which part to improve the most. \\
A third source of complexity arises from the iterative structure of the
problem. The coefficients of the Hamiltonian depend on the maximizers
$(z^{x},\Gamma^{x},u)$, which in turn depend on the value function and its derivatives. This creates a fixed-point structure: given a value function estimate $v$, one must solve three maximization problems to obtain updated controls; these controls define a new PDE, whose solution yields a new $v$, and the process repeats. In practice, this nested optimization may lead to
oscillations or loss explosions, especially early in the training process, especially when the exponential utilities amplify
errors or when the default intensity becomes large. To alleviate this
instability we update the controls through a smoothed scheme: instead of
using the new maximizers directly, we form an exponential moving average with
the previous controls. This damping mechanism prevents violent jumps in the
coefficients and results in a smoother iterative trajectory. The smoothing effect vanishes over time as we start trusting more and more the output of the optimization problems.\\
Regarding the optimization over $(z^{x},\Gamma^{x},u)$, we rely primarily on
grid search, which proved robust in the early stages of training when the value function is still crude and derivative estimates are noisy. More sophisticated gradient--based maximization was found to be less reliable, as it tended to produce non--smooth or unstable controls in the first iterations. Once the
controls are identified on the residual grid, a $k$--nearest--neighbour interpolation is applied during the forward simulation of the BSDE to recover pathwise controls at arbitrary states.

In particular, we deployed a simple feed-forward neural network (NN), with $3$-d inputs, $1$-d outputs, $8$ hidden layers each with $32$ neurons. The learning rate was piecewise linear, we were sampling $8192$ points uniformly and $2048$ between terminal and boundary points. Our first objective is to understand how the \emph{shape} of the default distribution influences the optimal control structure. Since skewness plays a particularly important role in determining when default is more likely to occur, we focus on the Beta$(a,b)$ family on $[0,T]$, which allows us to adjust the parameters $(a,b)$ to generate different skewness profiles. In this way we can systematically test how early-default–heavy versus late-default–heavy scenarios affect the dynamic incentives encoded in $U$, $Z^X$, and $\Gamma^X$. We worked with Beta$(2, 4)$, Beta$(1, 1)$ (i.e., a uniform distribution), and Beta$(4, 2)$. Of course, the exploding compensator $\lambda$ was computed but capped. The market has a positive drift of $10\%$ with a volatility of $30\%$. In particular, we will show as an example the calculations for the symmetric beta random variable, which is, in fact, a uniform distribution. Consider, for example, $\tau$ being uniformly distributed on $[0,1]$. We recall that
\[
    \mathbb P(\tau > x | \mathcal{F}_t)= \int_x^{+\infty} \gamma(t, u) du.
\]
Since we assume independence of $\tau$ from the filtration, we have \[\mathbb P(\tau > x | \mathcal{F}_t) = \mathbb P(\tau > x) = 1-x\]
as $\tau \sim U(0,1)$. Consequently, 
\[
    \gamma(t, u) =  \ind_{1 \geq u},\; \lambda_t = \frac{\gamma(t, t)}{\mathbb P(\tau > t | \mathcal{F}_t)} = \frac{1}{1-t}.
\]

Using \cite[Proposition 4.4]{el2010happens}, by defining
\[
    \Lambda_t = \int_0^t \lambda_s ds,
\]
we have 
\[
    \mathbb P(\tau > x | \mathcal{F}_t) = \exp{(-\Lambda_t)}.
\]
Therefore
\begin{align*}
    \Lambda_t = \int_0^t \lambda_s ds = \int_0^t \frac{1}{1-s}ds= -\ln{(1-t)},
\end{align*}
so that
\[
    \exp{(-\Lambda_t)} = \exp{(\ln{(1-t)})} = 1 -t,
\] matching the survival function of the uniform distribution. 
In all the experiments with an independent default time, $\tau$ has a probability density function $f_{\tau}$ and an associated cumulative distribution function $F_{\tau}$ so that
\[
    \lambda_t = \frac{f_{\tau}(t)}{1 - F_{\tau}(t)}.
\]

\paragraph{Comparison of exponential and uniform random default} We now undertake a comparative analysis of trading policies and incentive structures under bounded versus unbounded default horizons on $[0,1]$. For the unbounded setting, we specify the default time as exponentially distributed with intensity $\lambda = 1.5$, and contrast it with the uniform distribution on $[0,1]$ for the bounded benchmark. Figure~\ref{fig:exponential_vs_uniform} displays, for both default specifications, the resulting portfolio trajectories, optimal investment strategies, and incentive processes related to the default $U$ and the average incentive $Z^X$ associated with the portfolio dynamics. A salient feature emerges from the finite-horizon structure: under the exponential (unbounded) default, the optimal trading policy is initially more aggressive. This reflects the relatively high short-term hazard rate, which rationalizes front-loaded risk-taking in order to extract value before a potentially early default.
The optimal policy exhibits a transition pattern: initially, investment levels are higher under the exponential default relative to the uniform case; subsequently, the allocation becomes more conservative for the unbounded case; and, finally, it declines more sharply for the bounded case as the terminal date approaches.
Near maturity, the optimal strategy becomes approximately linear, consistent with standard finite-horizon contraction effects. Intuitively, the exponential default concentrates probability mass near the origin, inducing an initially aggressive stance followed by a gradual reduction in exposure, whereas the uniform default, exhibiting a flat hazard rate, generates a smoother, more homogeneously declining investment schedule.
Regarding incentives, both default-driven and portfolio-value–based compensations are systematically higher in the bounded case. This is economically consistent: when default is guaranteed to occur before $T$, the principal must provide stronger incentives to align the risk-averse agent’s actions with the desired investment profile. Conversely, $Z^X$ is naturally decreasing as $t \to T$ in the bounded scenario, reflecting the rising conditional probability of default and the diminishing marginal need to incentivize effort.
\begin{figure}[h]
    \centering

    \begin{subfigure}{0.45\textwidth}
        \centering
        \includegraphics[width=\linewidth]{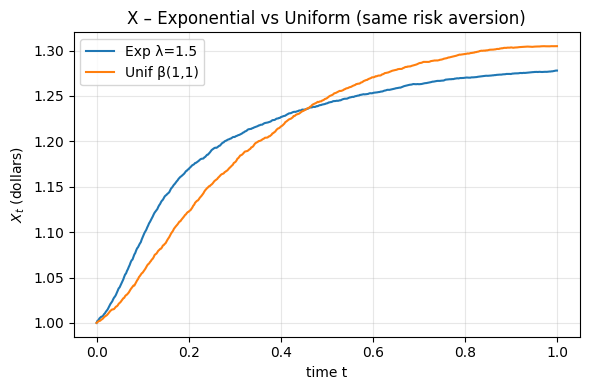}
        \caption{Portfolio value}
    \end{subfigure}
    \hfill
    \begin{subfigure}{0.45\textwidth}
        \centering
        \includegraphics[width=\linewidth]{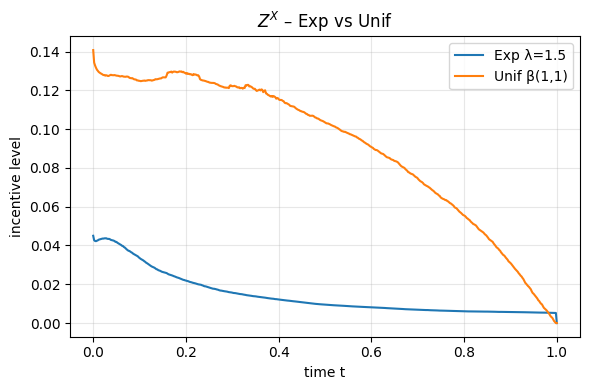}
        \caption{Average $Z^{X}$}
    \end{subfigure}

    \vspace{0.5cm}

    \begin{subfigure}{0.45\textwidth}
        \centering
        \includegraphics[width=\linewidth]{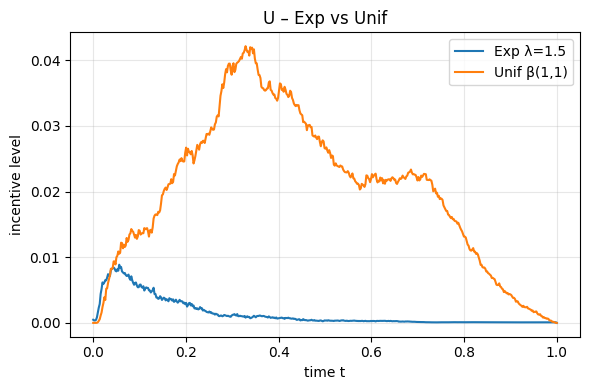}
        \caption{Average $U$}
    \end{subfigure}
    \hfill
    \begin{subfigure}{0.45\textwidth}
        \centering
        \includegraphics[width=\linewidth]{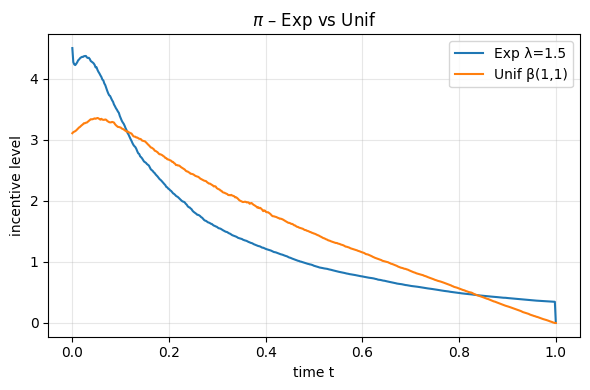}
        \caption{Optimal strategy}
    \end{subfigure}

    \caption{Random default with uniform distribution on $[0,T]$ and exponential distribution. Top left: portfolio value for bounded and unbounded distributions. Top right: average $Z^{X}$ for bounded and unbounded distributions. Bottom left: average $U$ for bounded and unbounded distributions. Bottom right: optimal investment strategy for bounded and unbounded distributions.}
    \label{fig:exponential_vs_uniform}
\end{figure}

\paragraph{Sensitivities with respect to the risk aversion}
In Figure~\ref{fig:sensitivityriskaverse}, we investigate the impact of the ratio $\frac{\eta}{\varsigma}$ under a uniform default time on $[0,T]$ with $T=1$ on the portfolio value, as well as on the compensations $Z^X$ and $U$ with respect to the default time and the optimal investment strategy. We first observe that when the agent and the principal share the same risk-aversion parameter, the portfolio value is higher for lower levels of risk-aversion as expected in risky investments. Alternatively, if the agent is more risk-averse than the principal, the resulting portfolio value decreases. In other words, delegating the portfolio to a more risk-averse manager is less efficient, and alignment of risk-aversion parameters leads to improved portfolio performance. This phenomenon is explained by the optimal investment strategy: a manager who is less risk-averse than the client invests more aggressively in the market, generating a form of risk-transfer effect. This is reflected in a higher compensation $Z^{X}$ (blue versus red curves), and, more interestingly, in a higher compensation $U$ associated with the random default risk.

\begin{figure}[h]
    \centering

    \begin{subfigure}{0.45\textwidth}
        \centering
        \includegraphics[width=\linewidth]{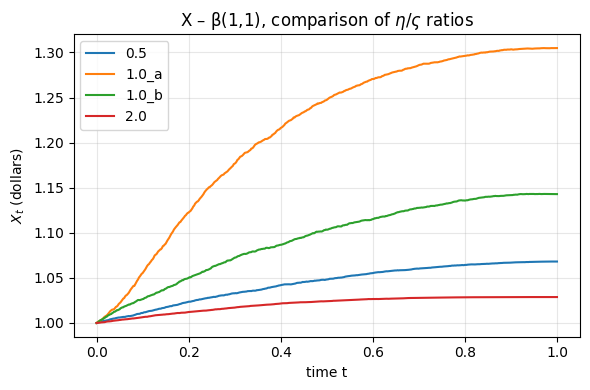}
        \caption{Portfolio value}
    \end{subfigure}
    \hfill
    \begin{subfigure}{0.45\textwidth}
        \centering
        \includegraphics[width=\linewidth]{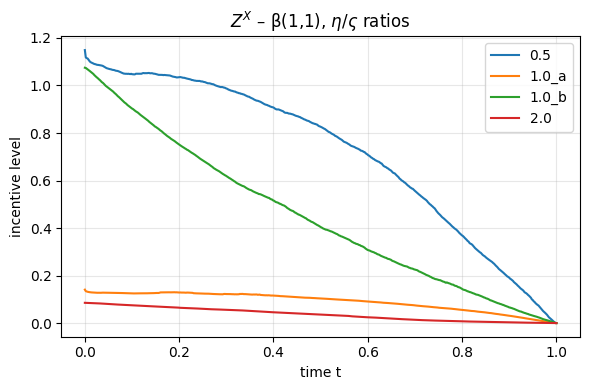}
        \caption{Average $Z^{X}$}
    \end{subfigure}

    \vspace{0.5cm}

    \begin{subfigure}{0.45\textwidth}
        \centering
        \includegraphics[width=\linewidth]{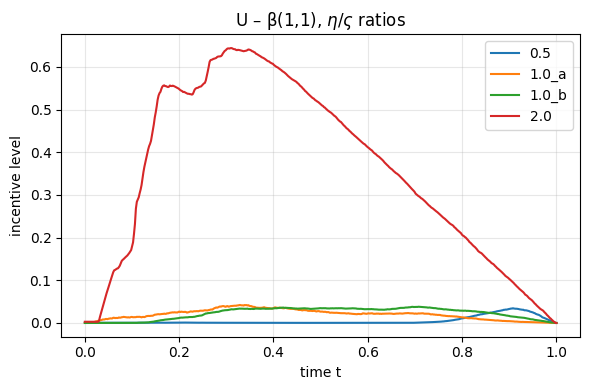}
        \caption{Average $U$}
    \end{subfigure}
    \hfill
    \begin{subfigure}{0.45\textwidth}
        \centering
        \includegraphics[width=\linewidth]{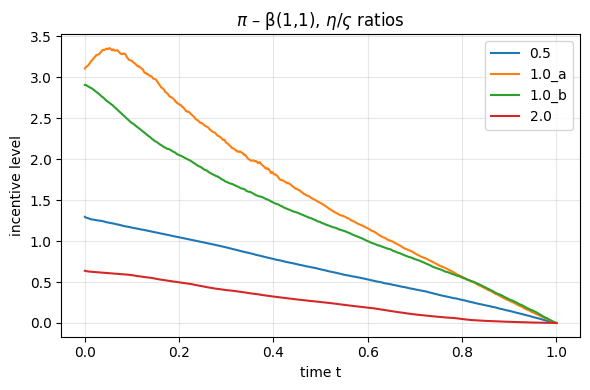}
        \caption{Optimal strategy}
    \end{subfigure}

    \caption{Random default with uniform distribution on $[0,T]$. Top left: average portfolio value for different risk-aversion ratios. 
    Top right: average $Z^{X}$ for different risk-aversion ratios. 
    Bottom left: average $U$ for different risk-aversion ratios. Bottom right: average optimal investment strategy for different risk-aversion ratios. 
    The blue curve corresponds to $\eta = 0.5\,\varsigma$ (agent less risk averse than the principal), 
    the red curve corresponds to $\eta = 2\,\varsigma$ (agent more risk averse than the principal), 
    the orange curve corresponds to $\eta = \varsigma$ with a low level of risk aversion, 
    and the green curve corresponds to $\eta = \varsigma$ with a high level of risk aversion.}
    
    \label{fig:sensitivityriskaverse}
\end{figure}

\paragraph{Impact of the skewness of the distribution}
A very interesting comparison, for the bounded distribution, is the impact of skewness, a good measure for understanding the shape of the distribution and where the probability mass is concentrated. We fix the risk aversion for both the agent and the principal to the higher value of $1$ and obtain the plots in figure \ref{fig:skew}
\begin{figure}[!ht]
    \centering

    \begin{subfigure}{0.45\textwidth}
        \centering
        \includegraphics[width=\linewidth]{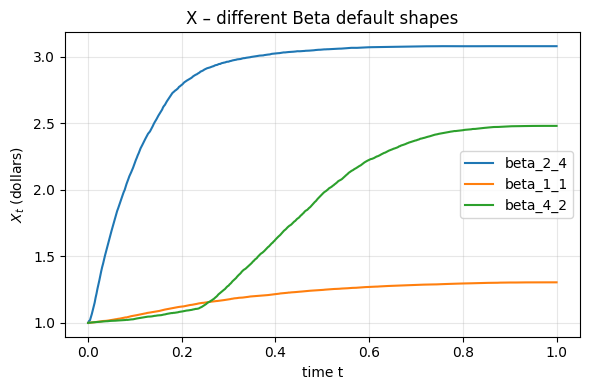}
        \caption{Portfolio value}
    \end{subfigure}
    \hfill
    \begin{subfigure}{0.45\textwidth}
        \centering
        \includegraphics[width=\linewidth]{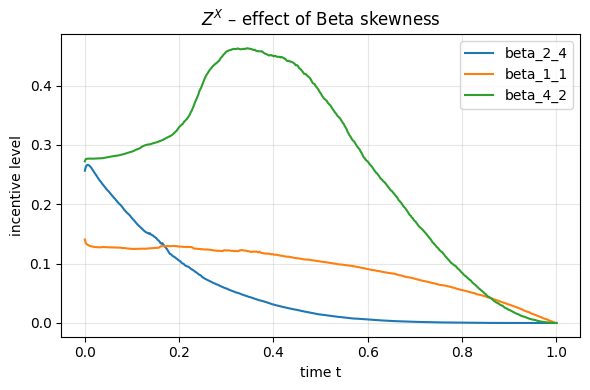}
        \caption{Average $Z^{X}$}
    \end{subfigure}

    \vspace{0.5cm}

    \begin{subfigure}{0.45\textwidth}
        \centering
        \includegraphics[width=\linewidth]{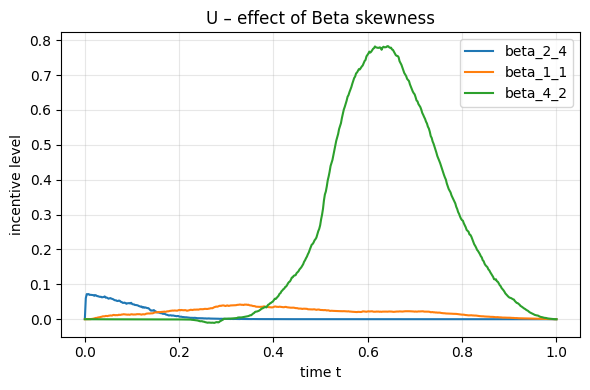}
        \caption{Average $U$}
    \end{subfigure}
    \hfill
    \begin{subfigure}{0.45\textwidth}
        \centering
        \includegraphics[width=\linewidth]{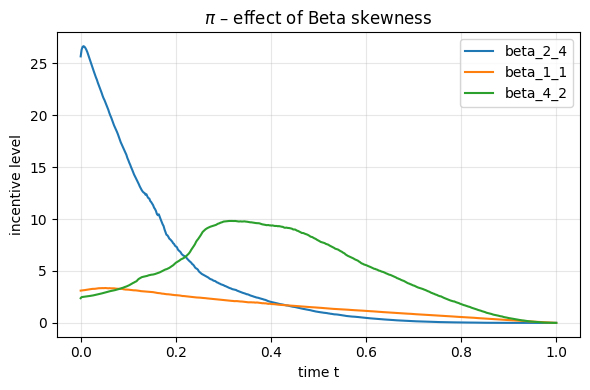}
        \caption{Optimal strategy}
    \end{subfigure}

    \caption{Random default with $3$ different beta distributions on $[0,T]$: in green, we have a left-skewed $\mathrm{Beta}(4,2)$, in blue a right-skewed $\mathrm{Beta}(2,4)$, and in orange a uniform distribution.}
    
    \label{fig:skew}
\end{figure}
Now, it is not surprising that, under different distributions, we can get radically different strategies and portfolio values, as the exponential utility will shrink the differences. The interesting behavior is what happens for the incentives. Let's start by analyzing the right-skewed beta (in blue): the probability mass is all concentrated in the first part of the time interval, therefore the agent knows that most of the compensation will come from the percentage on the wealth and he takes an extremely aggressive strategy. Knowing this, the principal offers a very high share of the portfolio’s growth together with immediate insurance against default. Because the default is expected to occur early, there is little incentive toward the latter part of the time interval, and all quantities converge to zero.

The situation changes in the opposite case of a left-skewed Beta distribution. Here, both the agent and the principal understand that default is more likely to occur in the second half of the horizon. As a result, both have an interest in adopting aggressive strategies; however, for the same level of compensation, the agent feels less pressure to invest quickly, since he expects more time before default. Consequently, the principal must continually increase the incentive rate to motivate the agent to intensify his investment activity.

This generates a delayed response relative to what one might initially expect: the agent becomes more cautious and reduces investment just as the probability of default sharply increases, requiring the principal to offer substantially higher insurance. The reason the insurance component becomes so large is intuitive: on average, the portfolio value tends to be high due to the earlier aggressive strategy, so a late default represents a much greater lost opportunity than a default occurring near the start of the horizon.

\paragraph{Comparison of linear and general contracts}
Figure \ref{fig:linear} shows the difference between the optimal solution when contracts are restricted to the class of linear contracts $\Xi^{l}$ and the solution under the full class of admissible contracts $\Xi$. The incentive rates of 5\% and 20\% are selected since 5\% is the best fixed percentage within the restricted class, while 20\% reflects a common benchmark in the hedge fund industry. The resulting optimality gap in portfolio performance is significant (top left), and in both cases the linear strategies are noticeably more conservative (bottom right). The interaction between the growth-based incentive, the default insurance, and the investment strategy is particularly revealing. When the agent receives only 5\% of the portfolio’s growth, he adopts a more aggressive market stance; however, the risk of default becomes relevant earlier, since he requires sufficient time to expand the portfolio in order to earn a meaningful absolute compensation. Anticipating this, the principal provides earlier default insurance.

At the opposite end, a 20\% incentive, higher than the growth share in the fully optimal contract at the initial time, induces greater risk aversion in the agent, leading to a less aggressive exposure to the risky asset. As a result, the insurance component activates later. Under the optimal growth incentive, the timing of default insurance lies between these two extremes, striking a balance between the competing incentives and supporting a more aggressive strategy when market conditions are favorable.
\begin{figure}[!ht]
    \centering

    \begin{subfigure}{0.45\textwidth}
        \centering
        \includegraphics[width=\linewidth]{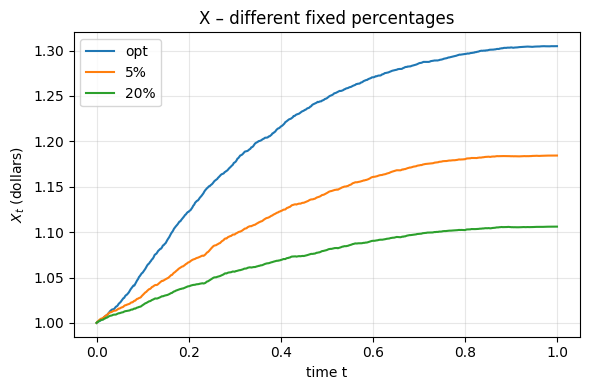}
        \caption{Portfolio value}
    \end{subfigure}
    \hfill
    \begin{subfigure}{0.45\textwidth}
        \centering
        \includegraphics[width=\linewidth]{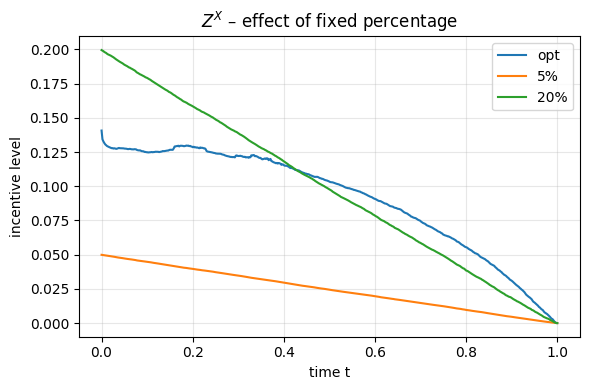}
        \caption{Average $Z^{X}$}
    \end{subfigure}

    \vspace{0.5cm}

    \begin{subfigure}{0.45\textwidth}
        \centering
        \includegraphics[width=\linewidth]{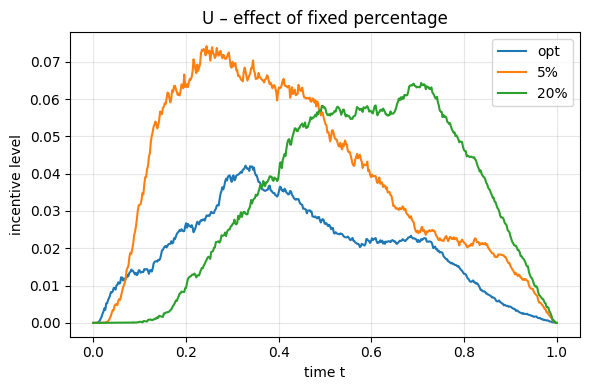}
        \caption{Average $U$}
    \end{subfigure}
    \hfill
    \begin{subfigure}{0.45\textwidth}
        \centering
        \includegraphics[width=\linewidth]{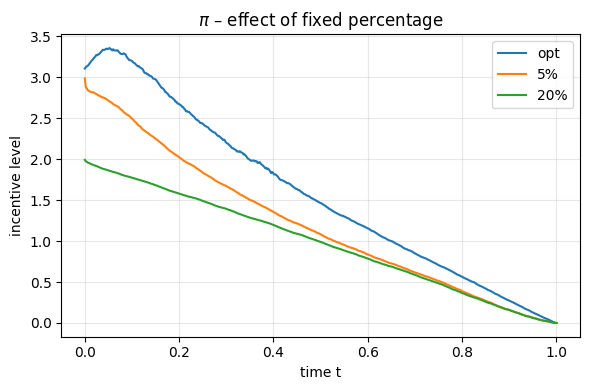}
        \caption{Optimal strategy}
    \end{subfigure}

    \caption{Linear incentive on portfolio growth under uniform default. Top left: portfolio value for different percentages of the incentive. Top right: average $Z^{X}$ for different percentages of the incentive. Bottom left: average $U$ for different percentages of the incentive. Bottom right: optimal investment strategy for different percentages of the incentive.}
    
    \label{fig:linear}
\end{figure}

\newpage

 \bibliographystyle{plain}
 \small
\bibliography{sn-bibliography}

@article{aksamit2017enlargement,
  title={Enlargement of filtration with finance in view},
  author={Aksamit, Anna and Jeanblanc, Monique and others},
  year={2017},
  publisher={Springer}
}

@article{soner2012wellposedness,
  title={Wellposedness of second order backward SDEs},
  author={Soner, H Mete and Touzi, Nizar and Zhang, Jianfeng},
  journal={Probability Theory and Related Fields},
  volume={153},
  number={1},
  pages={149--190},
  year={2012},
  publisher={Springer}
}

@article{denis2024second,
  title={Second order BSDEs with jumps by measurable selection argument},
  author={Denis, Laurent and Matoussi, Anis and Zhou, Chao},
  year={2024}
}

@article{mastrolia2018moral,
  title={Moral hazard under ambiguity},
  author={Mastrolia, Thibaut and Possama{\"\i}, Dylan},
  journal={Journal of Optimization Theory and Applications},
  volume={179},
  number={2},
  pages={452--500},
  year={2018},
  publisher={Springer}
}

@article{mou2015uniqueness,
  title={Uniqueness of viscosity solutions for a class of integro-differential equations},
  author = {Mou, Chenchen and {\'S}wi\k{e}ch, Andrzej},
  journal={Nonlinear Differential Equations and Applications NoDEA},
  volume={22},
  number={6},
  pages={1851--1882},
  year={2015},
  publisher={Springer}
}

@inproceedings{alvarez1996viscosity,
  title={Viscosity solutions of nonlinear integro-differential equations},
  author={Alvarez, Olivier and Tourin, Agnes},
  booktitle={Annales de l'Institut Henri Poincar{\'e} C, Analyse non lin{\'e}aire},
  volume={13},
  number={3},
  pages={293--317},
  year={1996},
  organization={Elsevier}
}

@article{seydel2009existence,
  title={Existence and uniqueness of viscosity solutions for QVI associated with impulse control of jump-diffusions},
  author={Seydel, Roland C},
  journal={Stochastic Processes and their Applications},
  volume={119},
  number={10},
  pages={3719--3748},
  year={2009},
  publisher={Elsevier}
}

@article{karandikar1995pathwise,
  title={On pathwise stochastic integration},
  author={Karandikar, Rajeeva L},
  journal={Stochastic Processes and their applications},
  volume={57},
  number={1},
  pages={11--18},
  year={1995},
  publisher={Elsevier Science}
}

@article{cadenillas2007optimal,
  title={Optimal risk-sharing with effort and project choice},
  author={Cadenillas, Abel and Cvitani{\'c}, Jak{\v{s}}a and Zapatero, Fernando},
  journal={Journal of Economic Theory},
  volume={133},
  number={1},
  pages={403--440},
  year={2007},
  publisher={Elsevier}
}

@article{lin2022random,
  title={Random horizon principal-agent problems},
  author={Lin, Yiqing and Ren, Zhenjie and Touzi, Nizar and Yang, Junjian},
  journal={SIAM Journal on Control and Optimization},
  volume={60},
  number={1},
  pages={355--384},
  year={2022},
  publisher={SIAM}
}

@article{lin2020second,
  title={Second order backward SDE with random terminal time},
  author={Lin, Yiqing and Ren, Zhenjie and Touzi, Nizar and Yang, Junjian},
  year={2020}
}

@article{possamai2025mind,
  title={Mind the jumps: when 2BSDEs meet semi-martingales},
  author={Possama{\"\i}, Dylan and Rodrigues, Marco and Saplaouras, Alexandros},
  journal={arXiv preprint arXiv:2507.01767},
  year={2025}
}

@article{mastrolia2025agency,
  title={Agency Problems and Adversarial Bilevel Optimization under Uncertainty and Cyber Threats},
  author={Mastrolia, Thibaut and Yan, Haoze},
  journal={arXiv preprint arXiv:2505.08989},
  year={2025}
}

@article{possamai2018stochastic,
  title={Stochastic control for a class of nonlinear kernels and applications},
  author={Possama{\"\i}, Dylan and Tan, Xiaolu and Zhou, Chao},
  journal={The Annals of Probability},
  volume={46},
  number={1},
  pages={551--603},
  year={2018},
  publisher={JSTOR}
}

@article{cheridito2007second,
  title={Second-order backward stochastic differential equations and fully nonlinear parabolic PDEs},
  author={Cheridito, Patrick and Soner, H Mete and Touzi, Nizar and Victoir, Nicolas},
  journal={Communications on Pure and Applied Mathematics: A Journal Issued by the Courant Institute of Mathematical Sciences},
  volume={60},
  number={7},
  pages={1081--1110},
  year={2007},
  publisher={Wiley Online Library}
}

@article{gennaro20252bsde,
  title={2BSDE with uncertain horizon and application to stochastic control in erratic environments},
  author={Gennaro, Alberto and Mastrolia, Thibaut},
  journal={arXiv preprint arXiv:2506.15037},
  year={2025}
}

@article{shreve1994optimal,
  title={Optimal investment and consumption with transaction costs},
  author={Shreve, Steven E and Soner, H Mete},
  journal={The Annals of Applied Probability},
  pages={609--692},
  year={1994},
  publisher={JSTOR}
}

@incollection{merton1975optimum,
  title={Optimum consumption and portfolio rules in a continuous-time model},
  author={Merton, Robert C},
  booktitle={Stochastic optimization models in finance},
  pages={621--661},
  year={1975},
  publisher={Elsevier}
}

@article{crandall1983viscosity,
  title={Viscosity solutions of Hamilton-Jacobi equations},
  author={Crandall, Michael G and Lions, Pierre-Louis},
  journal={Transactions of the American mathematical society},
  volume={277},
  number={1},
  pages={1--42},
  year={1983}
}

@article{leung2014continuous,
  title={Continuous-time principal-agent problem with drift and stochastic volatility control: with applications to delegated portfolio management},
  author={Leung, Raymond CW},
  journal={Available at SSRN},
  year={2014}
}

@incollection{jeanblanc2014note,
  title={A note on BSDEs with singular driver coefficients},
  author={Jeanblanc, Monique and R{\'e}veillac, Anthony},
  booktitle={Arbitrage, credit and informational risks},
  pages={207--224},
  year={2014},
  publisher={World Scientific}
}

@article{crandall1992user,
  title={User’s guide to viscosity solutions of second order partial differential equations},
  author={Crandall, Michael G and Ishii, Hitoshi and Lions, Pierre-Louis},
  journal={Bulletin of the American mathematical society},
  volume={27},
  number={1},
  pages={1--67},
  year={1992}
}

@book{fleming2006controlled,
  title={Controlled Markov processes and viscosity solutions},
  author={Fleming, Wendell H and Soner, Halil Mete},
  volume={25},
  year={2006},
  publisher={Springer Science \& Business Media}
}

@book{touzi2012optimal,
  title={Optimal stochastic control, stochastic target problems, and backward SDE},
  author={Touzi, Nizar},
  volume={29},
  year={2012},
  publisher={Springer Science \& Business Media}
}

@article{kirilenko2017flash,
  title={The flash crash: High-frequency trading in an electronic market},
  author={Kirilenko, Andrei and Kyle, Albert S and Samadi, Mehrdad and Tuzun, Tugkan},
  journal={The Journal of Finance},
  volume={72},
  number={3},
  pages={967--998},
  year={2017},
  publisher={Wiley Online Library}
}

@article{pate2012black,
  title={On “black swans” and “perfect storms”: Risk analysis and management when statistics are not enough},
  author={Pat{\'e}-Cornell, Elisabeth},
  journal={Risk Analysis: An International Journal},
  volume={32},
  number={11},
  pages={1823--1833},
  year={2012},
  publisher={Wiley Online Library}
}

@article{papapantoleon2018existence,
  title={Existence and uniqueness results for BSDE with jumps: the whole nine yards},
  author={Papapantoleon, Antonis and Possama{\"\i}, Dylan and Saplaouras, Alexandros},
  year={2018}
}

@book{bielecki2013credit,
  title={Credit risk: modeling, valuation and hedging},
  author={Bielecki, Tomasz R and Rutkowski, Marek},
  year={2013},
  publisher={Springer Science \& Business Media}
}

@book{jeanblanc2009mathematical,
  title={Mathematical methods for financial markets},
  author={Jeanblanc, Monique and Yor, Marc and Chesney, Marc},
  year={2009},
  publisher={Springer Science \& Business Media}
}

@inproceedings{pardoux2005backward,
  title={Backward stochastic differential equations and quasilinear parabolic partial differential equations},
  author={Pardoux, Etienne and Peng, Shige},
  booktitle={Stochastic Partial Differential Equations and Their Applications: Proceedings of IFIP WG 7/1 International Conference University of North Carolina at Charlotte, NC June 6--8, 1991},
  pages={200--217},
  year={2005},
  organization={Springer}
}

@article{barles1997backward,
  title={Backward stochastic differential equations and integral-partial differential equations},
  author={Barles, Guy and Buckdahn, Rainer and Pardoux, Etienne},
  journal={Stochastics: An International Journal of Probability and Stochastic Processes},
  volume={60},
  number={1-2},
  pages={57--83},
  year={1997},
  publisher={Taylor \& Francis}
}

@article{pardoux1990adapted,
  title={Adapted solution of a backward stochastic differential equation},
  author={Pardoux, Etienne and Peng, Shige},
  journal={Systems \& control letters},
  volume={14},
  number={1},
  pages={55--61},
  year={1990},
  publisher={Elsevier}
}

@article{cvitanic2018dynamic,
  title={Dynamic programming approach to principal--agent problems},
  author={Cvitani{\'c}, Jak{\v{s}}a and Possama{\"\i}, Dylan and Touzi, Nizar},
  journal={Finance and Stochastics},
  volume={22},
  pages={1--37},
  year={2018},
  publisher={Springer}
}

@article{li2009incentive,
  title={Incentive contracts in delegated portfolio management},
  author={Li, C Wei and Tiwari, Ashish},
  journal={The Review of Financial Studies},
  volume={22},
  number={11},
  pages={4681--4714},
  year={2009},
  publisher={Oxford University Press}
}

@article{demarzo2006optimal,
  title={Optimal security design and dynamic capital structure in a continuous-time agency model},
  author={DeMarzo, Peter M and Sannikov, Yuliy},
  journal={The journal of Finance},
  volume={61},
  number={6},
  pages={2681--2724},
  year={2006},
  publisher={Wiley Online Library}
}

@article{stracca2006delegated,
  title={Delegated portfolio management: A survey of the theoretical literature},
  author={Stracca, Livio},
  journal={Journal of Economic surveys},
  volume={20},
  number={5},
  pages={823--848},
  year={2006},
  publisher={Wiley Online Library}
}

@article{lu2025multiagent,
  title={Multiagent Relative Investment Games in a Jump Diffusion Market with Deep Reinforcement Learning Algorithm},
  author={Lu, Liwei and Hu, Ruimeng and Yang, Xu and Zhu, Yi},
  journal={SIAM Journal on Financial Mathematics},
  volume={16},
  number={2},
  pages={707--746},
  year={2025},
  publisher={SIAM}
}

@article{ou2003optimal,
  title={Optimal contracts in a continuous-time delegated portfolio management problem},
  author={Ou-Yang, Hui},
  journal={The Review of Financial Studies},
  volume={16},
  number={1},
  pages={173--208},
  year={2003},
  publisher={Oxford University Press}
}

@article{dalmacio2004agency,
  title={The agency theory applied to the investment funds},
  author={Dalm{\'a}cio, Fl{\'a}via Z{\'o}boli and Nossa, Valcemiro and others},
  journal={Brazilian Business Review},
  volume={1},
  number={1},
  pages={31--44},
  year={2004}
}

@article{jensen1968performance,
  title={The performance of mutual funds in the period 1945-1964},
  author={Jensen, Michael C},
  journal={The Journal of finance},
  volume={23},
  number={2},
  pages={389--416},
  year={1968},
  publisher={JSTOR}
}

@book{bernstein1998investment,
  title={Investment management},
  author={Bernstein, Peter L and Damodaran, Aswath},
  year={1998},
  publisher={J. Wiley}
}

@article{almgren2001optimal,
  title={Optimal execution of portfolio transactions},
  author={Almgren, Robert and Chriss, Neil},
  journal={Journal of Risk},
  volume={3},
  pages={5--40},
  year={2001}
}

@article{baldacci2022governmental,
  title={Governmental incentives for green bonds investment},
  author={Baldacci, Bastien and Possama{\"\i}, Dylan},
  journal={Mathematics and Financial Economics},
  volume={16},
  number={3},
  pages={539--585},
  year={2022},
  publisher={Springer}
}

@article{chiusolo2024new,
  title={A new approach to principal-agent problems with volatility control},
  author={Chiusolo, Alessandro and Hubert, Emma},
  journal={arXiv preprint arXiv:2407.09471},
  year={2024}
}

@article{baldacci2019market,
  title={Market making and incentives design in the presence of a dark pool: a deep reinforcement learning approach},
  author={Baldacci, Bastien and Manziuk, Iuliia and Mastrolia, Thibaut and Rosenbaum, Mathieu},
  journal={arXiv preprint arXiv:1912.01129},
  year={2019}
}

@article{blanchet2008optimal,
  title={Optimal investment decisions when time-horizon is uncertain},
  author={Blanchet-Scalliet, Christophette and El Karoui, Nicole and Jeanblanc, Monique and Martellini, Lionel},
  journal={Journal of Mathematical Economics},
  volume={44},
  number={11},
  pages={1100--1113},
  year={2008},
  publisher={Elsevier}
}

@article{bremaud1978changes,
  title={Changes of filtrations and of probability measures},
  author={Br{\'e}maud, Pierre and Yor, Marc},
  journal={Zeitschrift f{\"u}r Wahrscheinlichkeitstheorie und verwandte Gebiete},
  volume={45},
  number={4},
  pages={269--295},
  year={1978},
  publisher={Springer}
}

@article{cvitanic2017moral,
  title={Moral hazard in dynamic risk management},
  author={Cvitani{\'c}, Jak{\v{s}}a and Possama{\"\i}, Dylan and Touzi, Nizar},
  journal={Management Science},
  volume={63},
  number={10},
  pages={3328--3346},
  year={2017},
  publisher={INFORMS}
}

@article{el2010happens,
  title={What happens after a default: the conditional density approach},
  author={El Karoui, Nicole and Jeanblanc, Monique and Jiao, Ying},
  journal={Stochastic processes and their applications},
  volume={120},
  number={7},
  pages={1011--1032},
  year={2010},
  publisher={Elsevier}
}

@article{Guo_2008,
    title={Intensity process and compensator: A new filtration expansion approach and the Jeulin–Yor theorem},
    author={Guo, Xin and Zeng, Yan},
    journal={The Annals of Applied Probability},
    volume={18},
    number={1},
    year={2008},
    publisher={Institute of Mathematical Statistics},
}

@article{holmstrom1987aggregation,
  title={Aggregation and linearity in the provision of intertemporal incentives},
  author={Holmstrom, Bengt and Milgrom, Paul},
  journal={Econometrica: Journal of the Econometric Society},
  pages={303--328},
  year={1987},
  publisher={JSTOR}
}

@article{hu2005utility,
    title={Utility maximization in incomplete markets},
	author = {Hu, Ying and Imkeller, Peter and M{\"u}ller, Matthias},
	journal = {The Annals of Applied Probability},
	pages = {1691 -- 1712},
    year = {2005},
	publisher = {Institute of Mathematical Statistics}
}

@book{ikeda2014stochastic,
  title={Stochastic differential equations and diffusion processes},
  author={Ikeda, Nobuyuki and Watanabe, Shinzo},
  year={2014},
  publisher={Elsevier}
}

@article{jeanblanc2015utility,
  title={Utility maximization with random horizon: a BSDE approach},
  author={Jeanblanc, Monique and Mastrolia, Thibaut and Possama{\"\i}, Dylan and R{\'e}veillac, Anthony},
  journal={International Journal of Theoretical and Applied Finance},
  volume={18},
  number={07},
  pages={1550045},
  year={2015},
  publisher={World Scientific}
}

@article{kharroubi2013mean,
  title={Mean-variance hedging on uncertain time horizon in a market with a jump},
  author={Kharroubi, Idris and Lim, Thomas and Ngoupeyou, Armand},
  journal={Applied Mathematics \& Optimization},
  volume={68},
  pages={413--444},
  year={2013},
  publisher={Springer}
}

@article{kobylanski2000backward,
  title={Backward stochastic differential equations and partial differential equations with quadratic growth},
  author={Kobylanski, Magdalena},
  journal={The annals of probability},
  volume={28},
  number={2},
  pages={558--602},
  year={2000},
  publisher={Institute of Mathematical Statistics}
}

@article{morlais2009utility,
  title={Utility maximization in a jump market model},
  author={Morlais, Marie-Amelie},
  journal={Stochastics: An International Journal of Probability and Stochastics Processes},
  volume={81},
  number={1},
  pages={1--27},
  year={2009},
  publisher={Taylor \& Francis}
}

@article{el1997backward,
  title={Backward stochastic differential equations in finance},
  author={El Karoui, Nicole and Peng, Shige and Quenez, Marie Claire},
  journal={Mathematical finance},
  volume={7},
  number={1},
  pages={1--71},
  year={1997},
  publisher={Wiley Online Library}
}

@book{privault2022introduction,
  title={Introduction to stochastic finance with market examples},
  author={Privault, Nicolas},
  year={2022},
  publisher={Chapman and Hall/CRC}
}

@article{raissi2017physics,
  title={Physics informed deep learning (part i): Data-driven solutions of nonlinear partial differential equations},
  author={Raissi, Maziar and Perdikaris, Paris and Karniadakis, George Em},
  journal={arXiv preprint arXiv:1711.10561},
  year={2017}
}

@article{sannikov2008continuous,
  title={A continuous-time version of the principal-agent problem},
  author={Sannikov, Yuliy},
  journal={The Review of Economic Studies},
  volume={75},
  number={3},
  pages={957--984},
  year={2008},
  publisher={Wiley-Blackwell}
}

@article{sirignano2018dgm,
  title={DGM: A deep learning algorithm for solving partial differential equations},
  author={Sirignano, Justin and Spiliopoulos, Konstantinos},
  journal={Journal of computational physics},
  volume={375},
  pages={1339--1364},
  year={2018},
  publisher={Elsevier}
}

@article{markowitz1991foundations,
  title={Foundations of portfolio theory},
  author={Markowitz, Harry M},
  journal={The journal of finance},
  volume={46},
  number={2},
  pages={469--477},
  year={1991},
  publisher={JSTOR}
}

@article{baydin2018automatic,
  title={Automatic differentiation in machine learning: a survey},
  author={Baydin, Atilim Gunes and Pearlmutter, Barak A and Radul, Alexey Andreyevich and Siskind, Jeffrey Mark},
  journal={Journal of machine learning research},
  volume={18},
  number={153},
  pages={1--43},
  year={2018}
}

\newpage
\appendix

\section{First-best optimal delegated contract}\label{FB-appendix}
As a benchmark case, we consider here the situation where the principal proposes both a contract and an investment strategy to the agent. We rewrite the principal problem in the first-best case as 
\[V^{FB}_0=\sup_{(\xi,\pi)} \mathbb E\Big[U^P(X^\pi_{T\wedge \tau}-\xi)\Big],\]
subject to 
\[J^A(\pi;x,\xi)\geq R.\]
We consider exponential utilities for both the agent and the principal so that $U^P(x)=-e^{-\varsigma x}$ and $U^A(x)=-e^{-\eta x}$, for risk-aversion parameters $\varsigma,\eta>0$ and no penalty on the trading strategy $\varepsilon=0$. 
Introducing a Lagrange multiplier $\rho\geq 0$, the problem is reduced to

\begin{align*}
V^{FB}_0&=\inf_{\rho\geq 0}\sup_{\pi}\sup_{\xi} \mathbb E\Big[U^P(X^\pi_{T\wedge \tau}-\xi)+\rho (J^A(\pi;x,\xi)-R) \Big],\\
&=\inf_{\rho\geq 0}\sup_{\pi}\sup_{\xi} \mathbb E\Big[U^P(X^\pi_{T\wedge \tau}-\xi)-\rho e^{-\gamma \xi}  -\rho R \Big].
\end{align*}
Using G\^ateaux derivatives, see for example \cite{mastrolia2018moral}, the first-order condition on an optimizer $\xi^\star$ gives
\begin{equation}\label{FOC}e^{-\varsigma (X_{T\wedge \tau}-\xi^\star)}=\frac{\rho \eta}\varsigma e^{-\gamma \xi^\star}.\end{equation}
Consequently,
\[\xi^\star= \frac1{\varsigma+\eta} [\log(\frac{\eta\rho}\varsigma)+\varsigma X_{T\wedge \tau}].\]
Note that the optimal contract is linear with respect to $X_{T\wedge \tau}$ and similar to the contract in \cite[Section 2.1]{cvitanic2017moral} as a function of the Lagrangian $\rho$. Substituting this optimizer into the investor's first-best problem and using the first-order condition \eqref{FOC} we get the following:
\begin{align*}
V^{FB}_0&=\inf_{\rho\geq 0}\sup_{\pi} \mathbb E\Big[U^P(X^\pi_{T\wedge \tau}-\xi^\star)+\rho (J^A(\pi;x,\xi^\star)-R) \Big],\\
&=\inf_{\rho\geq 0}\sup_{\pi} \mathbb E\Big[ -\rho(\frac{\varsigma+\eta}{\varsigma})e^{-\eta \xi^\star}\Big] -\rho R\\
&=\inf_{\rho\geq 0}\sup_{\pi} \mathbb E\Big[ -\rho(\frac{\varsigma+\eta}{\varsigma})   (\frac{\varsigma}{\eta\rho})^{\frac\eta{\varsigma+\eta}}e^{-\frac{\varsigma\eta}{\varsigma+\eta}X_{T\wedge \tau}}\Big] -\rho R\\
&=\inf_{\rho\geq 0} \big\{ \rho(\frac{\varsigma+\eta}{\varsigma})   (\frac{\varsigma}{\eta\rho})^{\frac\eta{\varsigma+\eta}}\tilde V_0^{FB} -\rho R\Big\},
\end{align*}
where 
\[\tilde V_0^{FB}=\sup_\pi \mathbb E\Big[-e^{-\frac{\varsigma\eta}{\varsigma+\eta} X_{T\wedge \tau}}\Big].\]

We introduce the following HJB equations depending on whether the support of $\tau$ is bounded in $[0,T]$ or not.

\[
\textbf{(bHJB-FB)}\quad\begin{cases}
  &  \partial_t \tilde v(t,s,x) + \tilde U(x) f(t) + \sup_{\nu}\bigl \{ \tilde{\mathcal {H}}^{\tilde v}(t,s,x;\nu)\bigl \}=0,\; t<T\\
&\tilde v(T,s,x)=0,\; (s,x)\in \mathbb R^m\times \mathbb R,
\end{cases}
\]
 and

\[
\textbf{(uHJB-FB)}\quad\begin{cases}
  &  \partial_t \tilde v(t,s,x) + \tilde U(x) f(t) + \sup_{\nu}\bigl \{ \tilde{\mathcal {H}}^{\tilde v}(t,s,x;\nu)\bigl \}=0,\; t<T\\
&\tilde v(T,s,x)=(1-F_\tau(T))U^P(x); (s,x)\in \mathbb R^m\times \mathbb R,
\end{cases}
\]

where
\begin{align*}
 \tilde{\mathcal {H}}^{\tilde v}(t,s,x;\nu)&=   \nu^\top b(t,s)\partial_x \tilde v(t,s,x)+\frac12\|\nu\sigma(t,s)\|^2\partial_{xx}\tilde v(t,s,x)\\
 &+\sum_{i=1}^m \partial_i \tilde v(t,s,x) b^i(t,s)s^i+\frac12 Tr(\Sigma(t,s)\Sigma(t,s)^\top D^2\tilde v(t,s,x))\\
 &+\sum_{i=1}^m \partial_{xs^i}\tilde v(t,s,x) \nu \sigma(t,s)(\sigma^i(t,s))^\top s^i.
\end{align*}
We then get the following theorem as a direct application of Ito's formula with localization procedure similar to the proof of Theorem \ref{thm:verif}.
\begin{theorem}
    Assume that Hypothesis B (resp. Hypothesis A) holds and there exists a solution of $\textbf{(bHJB-FB)}$ (resp.  $\textbf{(uHJB-FB)}$) denoted by $\tilde v$ with polynomial growth in $s,x$. Then, the optimal contract is given by 
   \[\xi^\star= \frac1{\varsigma+\eta} [\log(\frac{\eta\rho}\varsigma)+\varsigma X_{T\wedge \tau}],\]
    where $\hat \rho$ is the optimizer of 
    \[\inf_{\rho\geq 0} \big\{ \rho(\frac{\varsigma+\eta}{\varsigma})   (\frac{\varsigma}{\eta\rho})^{\frac\eta{\varsigma+\eta}}\tilde V_0^{FB} -\rho R\Big\},\]
    where $\tilde V_0^{FB}=\tilde v(0,S_0,x)$ and the optimal trading strategy is given by 
    \[\hat\pi_t=\nu^\star(t,S_t,X_t),\]
    where $\nu^\star(t,s,x)$ is an optimizer of $\tilde{\mathcal H}^{\tilde v}(t,s,x;\nu)$.
\end{theorem}

\begin{remark}
    Unlike moral hazard, the optimal contract is a linear function of $X_{T\wedge \tau}$ which is often observed in first-best scenario in contract theory, see for example \cite{mastrolia2018moral}. Note that the random default only differs 
\end{remark}

\section{Portfolio manager's problem and second-order BSDE}
\label{app:2bsdecontract}

In this appendix, we provide a rigorous mathematical framework and introduce the weak formulation of the portfolio optimization problem, extending \cite{cvitanic2017moral,cvitanic2018dynamic} to the case of an erratic horizon. It is now well established that the agent's problem under moral hazard in contract theory can be reduced to solving a second-order BSDE when the volatility of the asset is controlled, as first explained in \cite{cvitanic2018dynamic}.\footnote{Note that the very recent results of Chiusolo and Hubert \cite{chiusolo2024new} circumvent the need for second-order BSDEs by taking a detour through a first-best reformulation. However, since we now have access to a general theory of 2BSDEs with jumps and random horizons, we prefer to rely directly on the 2BSDE approach rather than to employ the technical shortcut of Chiusolo and Hubert.}
The results used in this section rely on several new developments in the theory of second-order BSDEs with jumps. Second-order BSDEs were initially introduced in \cite{cheridito2007second,soner2012wellposedness}, then reformulated in a tractable framework for stochastic control in \cite{possamai2018stochastic}, and later extended to random horizons in \cite{lin2020second,lin2022random,gennaro20252bsde}. More recently, the theory has been generalized to jumps and general semimartingales in \cite{denis2024second,possamai2018stochastic}, together with explicit links to stochastic control problems in \cite{mastrolia2025agency}. The work \cite{gennaro20252bsde} is particularly relevant for our analysis, as it directly connects 2BSDEs with jumps and \emph{erratic termination} to the associated stochastic control problem.
We begin by establishing a rigorous probabilistic setting under which the portfolio manager’s optimization problem is well-posed for any fixed compensation scheme $\xi$, under general integrability assumptions. We then prove the correspondence between the 2BSDE with erratic horizon and the value of the optimization problem. Throughout this appendix, we focus on the case $\epsilon = 0$, in order to stay consistent with the model of Cadenillas, Cvitanic and Zapatero \cite{cadenillas2007optimal}, thereby extending \cite[Section~6.3]{cvitanic2018dynamic} to the case of portfolio optimization with erratic termination.
For the sake of simplicity, and in line with a more realistic contractual structure, we assume in this appendix that the price process $S$ is not contractible for the investor (client, principal). Consequently, we take $X$ as the only contractible variable and thus as the main canonical process in the probabilistic weak formulation developed below.

\subsection{Probabilistic framework and weak formulation}

Let $\Omega := C([0,T],\mathbb{R})$ be the canonical space endowed with the 
canonical process $X$ and canonical filtration 
$\mathbb{F}=(\mathcal{F}_t)_{0\le t\le T}$.
Let $\tau:\Omega\to[0,T]$ be a finite $\mathbb{F}$--measurable stopping time.  
We denote by $\mathbb{G}=(\mathcal{G}_t)_{0\le t\le T}$ the progressive enlargement 
of $\mathbb{F}$ with $\tau$, defined by
\[
\mathcal{G}_t := \bigcap_{s>t} 
\big( \mathcal{F}_s \vee \sigma(\tau\wedge s) \big),
\qquad 0\le t\le T,
\]
which satisfies the usual conditions. A control process is a $\mathbb{G}$-progressively measurable process 
$\pi=(\pi_t)_{0\le t\le T}$ taking values in a set $A\subset\mathbb{R}$ and satisfying Definition \ref{admissible}. For such a control $\pi$, we say that a probability measure $\mathbb{P}$ on 
$(\Omega,\mathcal{G}_T)$ belongs to the set $\mathcal{P}(\pi)$ if there exists a 
$\mathbb{G}$-Brownian motion $W^{\mathbb{P}}$ under $\mathbb{P}$ such that
\[
X_t
=
x + \int_0^{t\wedge\tau} \pi_s \sigma_s\, dW_s^{\mathbb{P}}
  + \int_0^{t\wedge\tau} \pi_s b_s\, ds,
\qquad 0\le t\le T,\quad \mathbb{P}\text{-a.s.}
\]

The global set of admissible probability measures on the enlarged space is
\[
\mathcal{P}
   := \bigcup_{\pi\in\mathcal{A}} \mathcal{P}(\pi),
\]
where $\mathcal{A}$ denotes the class of all $\mathbb{G}$-progressively measurable admissible controls.  The problem of the portfolio manager is then given by 

\begin{equation}\label{pbAgent:rigor}  \tag{A} V^A_0(x,\xi)=\sup_{(\pi,\mathbb P^\pi)\in \mathcal A\times \mathcal P} J^A(\pi;x,\xi),\end{equation}
where
\[J^A(\pi;x,\xi):=\mathbb E^{\mathbb P^\pi}[ U^A(\xi)].\]
\begin{remark}
    In order to apply the result in \cite{gennaro20252bsde} we need to enforce the $\mathcal P-$density hypothesis, see \cite[Hypothesis 2.4 ($\mathcal P_0$-Density Hypothesis)]{gennaro20252bsde}. It is in particular satisfied as soon as $\tau$ is independent of $\mathbb F$ with a distribution invariant under any probability in $\mathcal P$.
\end{remark}

\subsection{The portfolio manager's optimization: 2BSDE with erratic termination}

We consider the dynamic version of the value function
Following the same lines as \cite[Proof of Theorem 4.2]{cvitanic2017moral} and assuming that the family of probabilities in $\mathcal P$ satisfies \cite[2.1.3. Conditioning and concatenation of probability measures and Assumption 2.1 (iii)-(iv)-(v)]{possamai2018stochastic}, we deduce that $V_0=Y_0$ where $(Y,Z,U,K)$ is a solution to the following second-order BSDE

\begin{equation}
\label{2bsde:app}Y_t=U_A(\xi)+\int_{t\wedge \tau}^{T\wedge \tau} F_s(Z^X_s,\hat\sigma^2_s)ds-\int_{t\wedge \tau}^{T\wedge \tau} Z^X_s dX_s -\int_{t\wedge \tau}^{T\wedge \tau} U_s dH_s+K_T-K_t ,\end{equation}
where for any $\Sigma>0$ 
\[F_s(z,\Sigma):=\sup_{\nu\in B_s(\Sigma)} \{z_x\nu \sigma_s \theta_s\}, \; B_s(\Sigma):=\{\nu\in  C,\; \|\nu\sigma_s\|^2=\Sigma\},\]
and where $\hat\sigma$ is the quadratic variation of the process $X$ under each $\mathbb P\in \mathcal P$ (see \cite{karandikar1995pathwise,cvitanic2018dynamic}). 
\begin{proposition}[\cite{gennaro20252bsde}]
   Assume that $U_A(\xi)\in \mathbb L^{p,k}$ for some $1\leq k\leq p$ and $\sigma,b$ are bounded.\footnote{We refer to \cite[Section 2.3]{gennaro20252bsde} for the exact integrability condition on $U_A(\xi)$. Note that it is satisfied for exponential utilities as soon as $\xi$ has exponential moments of any order greater than the risk aversion parameter for any probability $\mathbb P\in \mathcal P$. } there exists a unique solution $(Y,Z^X,U,K)$ to the 2BSDE with erratic horizon \eqref{2bsde:app}. 
\end{proposition}

Following the same steps as \cite[section 5.4]{cvitanic2018dynamic}, without loss of generality, there exists an $\mathbb G-$predictable process $\Gamma^X$ such that 
\[dK_t=H_s(Z^X_s,\Gamma^X_s)-F_s(Z^X_s,\hat\sigma^2_s) -\frac12\hat\sigma^2_s\Gamma^X_s,\]
where
\[H_s(z,g)=\sup_\Sigma\{\frac12\Sigma g+F_s(z,\Sigma)\},\; \Sigma\in \{\|\nu\sigma_s\|^2,\; \nu\in C\}.\] We thus get the decomposition
\begin{align*}
U_A(\xi)&= Y_0+\int_{0}^{T\wedge \tau} Z_s^X dX_s + \int_{0}^{T\wedge \tau} U_s dH_s-\int_0^{T\wedge \tau} (H_s(Z_s^X,\Gamma_s^X)-\frac12 \sigma_s^2\Gamma_s^X)ds.
\end{align*}
Applying finally Ito's formula with the function $U_A^{-1}(x)=-\log(-x)/\eta$ we get the decomposition $\xi=Y_T^{Y_0,Z^X,U,\Gamma^X}$ from Definition \ref{def:contract}.

\section{Extension to Merton's portfolio problem with consumption}\label{consumption extension}

 We propose to extend the study by assuming that the investor decides to consume part of the portfolio continuously over time with consumption strategy $c_t$ given by an $\mathbb F$-adapted random process. This problem was initially developed by Merton in \cite{merton1975optimum} and this extension will follow the same type of problem extended to random termination.  The value of the portfolio becomes
\[
    X_t := x + \int_0^t \pi_s \sigma_s dW_s + \int_0^t \pi_s b_s ds - \int_0^t c_s ds, \]
    or equivalently
\[ X_t:= x + \int_0^t \beta_s dW_s + \int_0^t (\beta_s \theta_s-c_s) ds.
\]
We denote by $\mathfrak C$ the set of admissible consumption strategies defined as the set of $\mathbb F-$adapted processes $c$ with values in $[0,\infty)$ such that \[\mathbb E[\int_0^T c_t^2dt]<\infty.\]
Definition \ref{def:contract} of admissible contract thus becomes the following:

\begin{definition}[Admissible contract with contractible variables and consumption] We denote by $\Xi$ the set of admissible contracts $\xi$ composed of $\mathcal G_{\tau\wedge T}-$measurable random variable $\xi=Y_{T\wedge \tau}^{Y_0,Z,Z^X,U,\Gamma^X,\Gamma}$, controlled by $\mathbb G-$predictable real-valued processes $U,Z=(Z^i)_{1\leq i\leq m},Z^X,\Gamma^X,\Gamma=(\Gamma_i)_{1\leq i\leq m}$ such that $\mathcal{I}_m - \Gamma^X_t \sigma_t \sigma^T_t$ is a positive definite matrix\footnote{$\mathcal{I}_m$ denotes the identity matrix in dimension $m$.} and
    \begin{align*}
   Y_t^{Y_0,Z,Z^X,U,\Gamma^X,\Gamma}&= Y_0+ \int_{0}^{t}\sum_{i=1}^m \frac{Z^i_r}{S^i_r} dS^i_r + \int_{0 }^{t} Z_r^X dX_r + \int_{0}^{t} U_r dH_r \\
    &+ \frac{1}{2}\int_{0}^{t} (\Gamma_r^X + \eta (Z_r^X)^2 ) d\langle X, X\rangle_r + \int_{0}^{t} \sum_{i=1}^m \frac{\Gamma_r^i}{S^i_t} d\langle S^i, X\rangle_r\\
    &- \int_{0}^{t} F(r, Z_r, Z_r^X, \Gamma_r, \Gamma_r^X, U_r,c_r) dr,
\end{align*}
for any $c\in\mathfrak C$ where \[ F(t, z, z_x, g, g_x, u,c)=\sup_{\nu\in C} f(t,z,z_x,g,g_x,u,\nu,c),\] with $f:[0,T]\times\Omega\times \mathbb R^m\times\mathbb R\times \mathbb R^m\times\mathbb R\times \mathbb R\times C\times [0,\infty)$ by
\begin{align*} f(t,z,z_x,g,g_x,u,\nu,c)&= z b_t +z_x(\nu \sigma_t \theta_t-c)+ \frac{1}{2} g_x\| \nu\sigma_t\|^2 - \frac{1}{2}\|\nu - \alpha_t\|^2 +\sum_{i=1}^m g^i \nu^i \sigma_t^i(\sigma_t^i)^\top\\
&- \frac{\lambda_t}{\eta} (\exp{(-\eta u) - 1})- \frac{\eta}{2}||z \sigma_t||^2,\end{align*}
 
and there exists $\eta'>\eta$ such that   \[\mathbb E\Big[\int_0^T (\|Z_s\|^2+\|Z^X_s\|^2+\|\Gamma_s\|+\|\Gamma^X_s\| +\|U_s\|^2\lambda_s) ds + \sup_{0\leq t\leq T} e^{\eta'|Y_t^{Y_0,Z,Z^X,U,\Gamma,\Gamma^X}|}\Big]<\infty.\]

   \end{definition}

We are assuming that the portfolio manager is receiving the contract $\xi$ and optimally chooses a strategy $\pi$ in order to stay close to a benchmark strategy $\alpha$ \textit{\`a la} Almgren-Chriss, see \cite{almgren2001optimal} so that the objective of the manager is to solve for any contract $\xi\in \Xi$ and consumption process $c\in \mathfrak C$ fixed by the investor
\begin{equation}\label{pbAgent-Appendix}  \tag{A} V^A_0(x,\xi,c)=\sup_{\pi\in \mathcal A} J^A(\pi;x,\xi,c),\end{equation}
where
\[J^A(\pi;x,\xi,c):=\mathbb E[ U^A(\xi- \int_{0}^{\T} \|\pi_s - \alpha_s\|^2 ds)].\]

The bilevel optimization corresponding to this portfolio problem with consumption becomes

\begin{equation}    \label{eq:principalPb:consumption}
    \tag{P}
     V_0^P(x):=\sup_{(\xi,\hat\pi,c)\in \mathcal C\times \mathcal A\times \mathfrak C}\mathbb E[U_P(X_{T\wedge \tau}-\xi)+\int_0^{T\wedge \tau} u_P(c_t)dt],
     \end{equation}
where $u_P:\mathbb R^+\longrightarrow\mathbb R$ is a utility function, non-convex and non-decreasing in the consumption control variable,

subject to
\begin{itemize}
    \item (R):\; $V_0^A(x,\xi,c)\geq R_0$
    \item (IC):\; $V_0^A(x,\xi,c) = J^A(\hat\pi;x,\xi,c)$.
\end{itemize}

Note that Theorem \ref{thm:value} is unchanged since the only impact of $c$ for the agent (portfolio manager) will be given through the compensation $Z^X$. This is consistent with Merton's approach where the optimal portfolio strategy $\hat\pi$ and consumption can be optimized independently in a first-best case model. We thus define the following system of coupled SDEs with jumps with solution $(X^{\pi^*,c},\hat Y)$ controlled by $(Z,Z^X,\Gamma,\Gamma^X,U,c)$

\[
(SDE)\begin{cases}
     &dX^{\pi^*,c}_t =  \pi_t^*\sigma(t,X_t) dW_s + (\pi_t^* b(t,S_t)-c_t) dt\\
     &d\hat Y_t= \sum_{i=1}^m \frac{Z^i_t}{S^i_t} dS^i_t +  Z_t^X dX^{\pi^*,c}_t +  U_t dH_t+ \frac{1}{2} (\Gamma_t^X + \eta (Z_t^X)^2 ) d\langle X^{\pi^*,c}, X^{\pi^*,c}\rangle_t\\
     &\qquad - F(t, Z_t, Z_t^X, \Gamma_t, \Gamma_t^X,U_t,c_t) dt+\sum_{i=1}^m \frac{\Gamma^i_t}{S^i_t}  d\langle S^i, X^{\pi^*,c}\rangle_t\\
     &X^{\pi^*,c}_0=x,\\
     &\hat Y_0=0. 
\end{cases}
\]

The problem of the investor becomes
\[
     V_0^P(x)=\sup_{(Z, Z^X, \Gamma, \Gamma^X, U,c)\in \mathcal U\times \mathfrak C}\;\mathbb E[U^P(X_{T\wedge \tau}-\hat Y_{T\wedge \tau})+\int_0^{T\wedge \tau} u_P(c_t)dt].\]
     Note that this class of problems does not exactly fit the framework \cite{blanchet2008optimal}. It requires to introduce a new state variable
     \[I_t:= \int_0^T u_P(c_t)dt\Longleftrightarrow  dI_t=u_P(c_t)dt,\\, I_0=0.\]
The problem of the investor thus becomes

\[
     V_0^P(x)=\sup_{(Z, Z^X, \Gamma, \Gamma^X, U,c)\in \mathcal U\times \mathfrak C}\;\mathbb E[U^P(X_{T\wedge \tau}-\hat Y_{T\wedge \tau})+I_{T\wedge \tau}].\]

\begin{itemize}
\item For bounded random time $\tau$ with support in $[0,T]$:

\begin{equation*}
  \hat V_0= \sup_{(Z, Z^X, \Gamma, \Gamma^X, U,c)\in \mathcal U\times \mathfrak C}\;  \E\bigl[ \int_{0}^T \big(U^P(X_t^{\pi^*,c} - \hat{Y}_t) +I_t\big) f(t)dt \bigl],
\end{equation*}
with corresponding HJB PDE

\begin{equation}\label{pde:bounded:consumption}
\begin{cases}
  \partial_t v(t,s,x,y,\iota) + (U^P(x - y)+\iota)f(t) + \sup_{(z^x, z, g^x, g, u,c)}  \mathcal H^v(t,s, x, y,\iota, \nabla v, \Delta v, z^x, z, g^x, g, u,c)=0,\; t<T\\
v(T,s,x,y,\iota)=0,\; (s,x,y,\iota)\in\mathbb R^m\times \mathbb R\times \mathbb R\times\mathbb R,
\end{cases}
\end{equation} where $\mathcal H^\phi$ is a differential operator given by
\begin{align*}
&\mathcal H^\phi(t,s,x,y,\iota,z^x,z,g^x,g,u,c):=\\
&\sum_{i=1}^m\phi_{s^i} s^ib^i(t,s)  + \phi_x \pi^*(z,z^x,g,g^x) \sigma(t,s) \theta(t,s)+ \mathcal L^\phi(t,s,x,y,\iota,c) \\
&+ \frac12 Tr(s \sigma(t, s) \sigma^T(t,s) s D^2\phi)+ \frac{1}{2}\phi_{xx}||\pi^*(z,z^x,g,g^x)  \sigma(t,s)||^2\\
& + \phi_y \Big[z b(t,s) + z^x \pi^*(z,z^x,g,g^x) \sigma(t,s) \theta(t,s)
+ \frac{1}{2}||\sigma(t,s) \pi^*(z,z^x,g,g^x)||^2(g^x + \eta |z^x|^2) \\
&-  F(t, z, z^x, g, g^x,u,c) + \pi^*(z,z^x,g,g^x) \sigma(t,s) \sigma(t,s)^T g\Big]
   \\
&+ \frac{1}{2}\phi_{yy} (||z_t \sigma(t,s)||^2 + z^x||\sigma(t,s) \pi^*(z,z^x,g,g^x)||^2) \\
&+ \phi_{xy} (z^x ||\pi^*(z,z^x,g,g^x) \sigma(t,s)||^2  + \pi^*(z,z^x,g,g^x) \sigma(t,s) (z \sigma)^T)\\
&+ \sum_{i=1}^m (\pi^*(z,z^x,g,g^x)  \sigma(t,s) \sigma^i(t,s)^T s^i)(\phi_{xs^i} + \phi_{ys^i}z^x) + \lambda_t(\phi(t, s,x, y+u) - \phi(t,s, x, y))\\
&+\sum_{i=1}^m \phi_{ys^i} z^i \sigma(t,s) \sigma^i(t,s)^T s^i,
\end{align*}
and with 
\[\mathcal L^\phi(t,s,x,y,\iota,c):=\phi_\iota u_P(c)-\phi_x c.\]

\begin{theorem}[Verification Theorem - bounded case]\label{thm:verif:consumption:bounded}
Assume that there exists a function $\phi$ twice continuously differentiable in space and differentiable in time, such that $\phi(t,s,x,y,\iota)$ solves \eqref{pde:bounded:consumption}. Furthermore, assume that $\phi$ has a quadratic growth in $y$ and polynomial growth in $s,x,\iota$ such that 
\[|\phi(t,s,x,y,\iota)|\leq \kappa(1+|x|^p+\|s\|^p+|\iota|^p+|y|^2),\; p>1,\; \kappa>0.\]

Then, for each $t \in [0,T]$, the strategy $(\hat{Z}^X, \hat{Z}, \hat{\Gamma}^X, \hat{\Gamma}, \hat{U},\hat c)$ is an optimal control for the investor's problem and

\begin{align*}
    \phi(0,S_0,x,0,0)=V_0 =  \sup_{(Z^X, Z, \Gamma^X, \Gamma, U,c)} \mathbb{E}\left[ U_P(X_{T \wedge \tau}^{\pi^*} - \hat{Y}_{T \wedge \tau})+I_{T\wedge \tau}\right]-\hat{Y_0}.
\end{align*}
Let $c^*(t,s,x,y,\iota)$ be an optimizer of $\mathcal L^\phi(t,s,x,y,\iota,c)$ such that (SDE) admits a unique solution with this choice of $c$. 
The optimal contract is given by 
\begin{align*}
    \xi^\star 
    &= \hat{Y}_0+ \int_{0}^{\T} \sum_{i=1}^m \frac{\hat{Z}^i_t}{S^i_t} dS^i_t + \int_0^{\T} \hat{Z}_t^X dX_t + \int_{0}^{\T} \hat{U}_s dH_s \\
    &+ \frac{1}{2}\int_0^{\T} (\hat{\Gamma}_t^X + \eta (\hat{Z}_t^X)^2 ) d\langle X, X\rangle_t - \int_0^{\T} F(t, \hat{Z}_t, \hat{Z}_t^X, \hat{\Gamma}_t, \hat{\Gamma}_t^X,\pi^*_t,\hat c_t) dt\\
    &+\int_0^{\T} \sum_{i=1}^m \frac{\hat{\Gamma}^i_s}{S^i_t} d\langle S^i, X\rangle_t,
\end{align*}
and the optimal consumption is given by $\hat c_t:=c^*(t,S_t,X^{\pi^*,\hat c}_t,\hat Y_t,I_t)$.
\end{theorem}

\item For unbounded random time with $[0,T]\subsetneq Supp(\tau)$
\begin{equation*}
  \hat V_0= \sup_{(Z, Z^X, \Gamma, \Gamma^X, U,c)\in \mathcal U\times \mathfrak C}\;  \E\bigl[ \int_{0}^T \big(U^P(X_t^{\pi^*,c} - \hat{Y}_t) +I_t\big) f(t)dt+(1-F_\tau(T))\big(U_P(X_T-\hat Y_T)+I_T\big) \bigl],
\end{equation*}
with corresponding HJB PDE is

\begin{equation}\label{pde:unbounded:consumption}
\begin{cases}
  \partial_t v(t,s,x,y,\iota) + (U_P(x - y)+\iota)f(t) + \sup_{(z^x, z, g^x, g, u,c)}  \mathcal H^v(t,s, x, y,\iota, \nabla v, \Delta v, z^x, z, g^x, g, u,c)=0,\; t<T\\
v(T,s,x,y,\iota)=(1-F_\tau(T))\big(U_P(x-y)+\iota\big),\; (s,x,y,\iota)\in\mathbb R^m\times \mathbb R\times \mathbb R\times\mathbb R,
\end{cases}
\end{equation}
with the same $\mathcal H$ operator as in the bounded case.

\begin{theorem}[Verification Theorem - unbounded case]\label{thm:verif:consumption:unbounded}
Assume that there exists a function $\phi$ twice continuously differentiable in space and differentiable in time, such that $\phi(t,s,x,y,\iota)$ solves \eqref{pde:unbounded:consumption}. Furthermore, assume that $\phi$ has a quadratic growth in $y$ and polynomial growth in $s,x,\iota$ such that 
\[|\phi(t,s,x,y,\iota)|\leq \kappa(1+|x|^p+\|s\|^p+|\iota|^p+|y|^2),\; p>1,\; \kappa>0.\]

Then, for each $t \in [0,T]$, the strategy $(\hat{Z}^X, \hat{Z}, \hat{\Gamma}^X, \hat{\Gamma}, \hat{U},\hat c)$ is an optimal strategy for the investor's control problem and

\begin{align*}
    \phi(0,S_0,x,0,0)=V_0 =  \sup_{(Z^X, Z, \Gamma^X, \Gamma, U,c)} \mathbb{E}\left[ U_P(X_{T \wedge \tau}^{\pi^*} - \hat{Y}_{T \wedge \tau})+I_{T\wedge \tau}\right]-\hat{Y_0}.
\end{align*}
Let $c^*(t,s,x,y,\iota)$ be an optimizer of $\mathcal L^\phi(t,s,x,y,\iota,c)$ such that (SDE) admits a unique solution with this choice of $c$. 
The optimal contract is given by 
\begin{align*}
    \xi^\star 
    &= \hat{Y}_0+ \int_{0}^{\T} \sum_{i=1}^m \frac{\hat{Z}^i_t}{S^i_t} dS^i_t + \int_0^{\T} \hat{Z}_t^X dX_t + \int_{0}^{\T} \hat{U}_s dH_s \\
    &+ \frac{1}{2}\int_0^{\T} (\hat{\Gamma}_t^X + \eta (\hat{Z}_t^X)^2 ) d\langle X, X\rangle_t - \int_0^{\T} F(t, \hat{Z}_t, \hat{Z}_t^X, \hat{\Gamma}_t, \hat{\Gamma}_t^X,\pi^*_t,\hat c_t) dt\\
    &+\int_0^{\T} \sum_{i=1}^m \frac{\hat{\Gamma}^i_s}{S^i_t} d\langle S^i, X\rangle_t,
\end{align*}
and the optimal consumption is given by $\hat c_t:=c^*(t,S_t,X^{\pi^*,\hat c}_t,\hat Y_t,I_t)$.
\end{theorem}

\end{itemize}
\begin{remark}
Note that, if we instead assume that the investor uses singular controls for the consumption strategy $c$, modeled as a non-decreasing process with bounded variation (see, for example, \cite{shreve1994optimal}), this would lead to
\[ X_t:= x + \int_0^t \beta_s dW_s + \int_0^t \beta_s \theta_s ds -\int_0^t dc_s.
\]

This case involves several difficulties, including 
\begin{enumerate}
\item second-order BSDE with jumps driven by general martingales to solve the portfolio manager problem.  
\item an obstacle PDE with gradient constraint for the investor's optimization with random horizon. 
\end{enumerate}
The first point is critical and requires to link the manager problem to a very new type of 2BSDE. Recent advances in this direction have been made in \cite{possamai2025mind}. However, the link with stochastic control problem with singular control remains an open problem beyond the scope of our paper that we will leave for future research. 
\end{remark}

\section{The agent optimal strategy with power utility}\label{AppendixPower}
Let us now consider, under the same setting of a Geometric Brownian Motion for the risky assets, the same portfolio evolution and therefore the same trading strategy $\pi$ as in Section \ref{chap3}. The agent's problem is the same, but for the sake of simplicity let us assume that there is no running penalty, so that the agent problem becomes
\begin{equation}\label{pbPowerAgent}  \tag{Power-A} V^A_0(x,\xi)=\sup_{\pi\in \mathcal A} J^A(\pi;x,\xi),\end{equation}
where
\[J^A(\pi;x,\xi):=\mathbb E[ U^A(\xi)].\]
with 
\[
    U^A(x) = \frac{1}{1-\eta}x^{1-\eta},\; 0\leq \eta<1
\]
We then have the following result.

\begin{theorem}
    Assume that \ref{densityhyp} and Hypothesis A or Hypothesis B are satisfied. For any $\xi\in \Xi$, let us further assume that there exists a $(C, Z, Z^X, \Gamma, \Gamma^X, U)$ solution to 
    \[
        dC_t = C_{t^-} dY_t
    \]
    with $C_{\T} = \xi$ and $Y$ a semi-martingale with the following decomposition
    \begin{align}\label{contract}
        dY_t&= Z_t(b_t dt + \sigma_t dW_t) + Z_t^X \pi_t(\sigma_t dW_t + b_t dt)\nonumber \\
        &- \frac{1}{2} (\Gamma_t^X -\eta (Z_t^X)^2 ) \pi_t \sigma_t \sigma_t^T \pi^T dt + \pi_t \sigma \sigma^T \Gamma_t dt\\
        &- F(t, Y_t, Z_t, Z_t^X, \Gamma_t, \Gamma_t^X, U_t) dt + U_t dH_t \nonumber
    \end{align}
    Then, the optimal strategy solving \eqref{pbPowerAgent} is 
    \begin{equation}
        \hat\pi_t =\pi^*(Z_t, Z_t^X, \Gamma_t, \Gamma^X_t),\text{ with }\pi^*(z,z_x,g,g_x):= proj(e_t, C)
    \end{equation}
    and the optimal value is given by $V^A_0(x, \xi) = \frac{1}{1-\eta}C_0^{1-\eta}$, where
    \[
        q_t = \frac{1}{\Gamma^X_t}(Z^X_t (\sigma_t \sigma_t^{\top})^{-1}b_t + \Gamma_t)
    \]
    and
    \[
        Q_t = \Gamma^X_t \sigma_t \sigma_t^{\top}
    \]
\end{theorem}

\begin{proof}
    The proof follows a similar idea to the one in the main body of the paper: we want to find a process related to our optimization problem and ensure that the driver of a related BSDE is such that the process is a super-martingale for each strategy and a martingale for the optimal strategy. In this case, in order to achieve this, we are further assuming that the auxiliary process solves the auxiliary exponential BSDE. As before, we define the family of processes indexed by the strategy $\pi$
    \begin{equation*}
        R^{\pi}_{t} = U^A(C_t).
    \end{equation*}
    This time we set $C = (C_t)_{t \in [0, T]}$ as the solution of the following simple BSDE, $dC_t = C_{t^-} dY_t$, coupled with $C_{\T} = \xi$, where the semi-martingale $Y$ follows the dynamics:
    \begin{align*}
       Y_t^{Y_0,Z,Z^X,U,\Gamma^X,\Gamma}&= Y_0+ \int_{0}^{t}\sum_{i=1}^m \frac{Z^i_r}{S^i_r} dS^i_r + \int_{0 }^{t} Z_r^X dX_r + \int_{0}^{t} U_r dH_r \\
        &- \frac{1}{2}\int_{0}^{t} (\Gamma_r^X -\eta (Z_r^X)^2 ) d\langle X, X\rangle_r + \int_{0}^{t} \sum_{i=1}^m \frac{\Gamma_r^i}{S^i_t} d\langle S^i, X\rangle_r\\
        &- \int_{0}^{t} F(r, Z_r, Z_r^X, \Gamma_r, \Gamma_r^X, U_r) dr,
    \end{align*}
    or equivalently the infinitesimal decomposition in \eqref{contract}.
 Therefore,
    \[
        R_t^{\pi} = \psi(C_t)
    \]
    where $\psi( x) = \frac{1}{1-\eta}x^{1-\eta}$. Using a simple Ito's expansion, we get that
    \begin{align*}
        d R_t &= \partial_t\psi dt + \partial_x\psi dC^c_t + \frac{1}{2}\partial_{xx}\psi d\langle C^c_t, C^c_t\rangle + (\psi(C_t) - \psi(C_{t^-}))dH_t\\
        &= (1-\eta) \frac{1}{1-\eta} C_t^{-\eta} C_{t} dY_{t} - \frac{1}{2}(1-\eta)\eta \frac{1}{1-\eta} C_t^{-\eta-1} C_{t}^2  d\langle Y^c_t, Y^c_t\rangle+ (\psi(C_t) - \psi(C_{t^-}))dH_t\\
    \end{align*}
    In this way, we observe that we can write the first two terms as $R_t$ times some other quantities. At the same time, since $C_t = C_{t-}(1 + U_t)$ when $H$ jumps,
    \[
    \psi(C_t) - \psi(C_{t-}) 
    = \frac{1}{1-\eta} \left( C_{t-}(1 + U_t) \right)^{1-\eta} - \frac{1}{1-\eta} C_{t-}^{1-\eta}
    = \frac{1}{1-\eta} C_{t-}^{1-\eta} \left( (1 + U_t)^{1-\eta} - 1 \right).
    \]
    Therefore, we can write the increment of $R$ at time $t$ as
    \[
    dR_t = (1-\eta) R_{t^-}\, dY_t^c
        - \frac{1}{2}(1-\eta)\eta R_{t^-}\, d\langle Y^c, Y^c\rangle_t
        + R_{t^-} \frac{(1 + U_t)^{1-\eta} - 1}{1-\eta}\, dH_t.
    \]
    We now expand \(dY_t\), whose continuous part satisfies:
    \[
    dY_t^{c}
    = Z_t b_t\,dt
    + Z_t \theta_t\,dW_t
    + Z_t^{X}\,\pi_t \sigma_t\,dW_t
    + Z_t^{X}\,\pi_t \sigma_t\,\theta_t\,dt
    - \frac{1}{2}\big(\Gamma_t^{X} -\eta(Z_t^{X})^{2}\big)\,
      \pi_t \sigma_t \sigma_t^{\top} \pi_t^{\top}\,dt
    - F_t\,dt
    + \pi_t \sigma_t \sigma_t^{\top} \Gamma_t\,dt .
    \]
    Its quadratic variation is
    \[
    d \langle Y^{c}, Y^{c} \rangle_t
    = Z_t \sigma_t \sigma_t^{\top} Z_t^{\top}\,dt
    + (Z_t^{x})^{2}\, \pi_t \sigma_t \sigma_t^{\top} \pi_t^{\top}\,dt .
    \]
    Substituting into the expression for $dR_t$, we get that
    \begin{align*}
        dR_t
        &= R_t \, (1-\eta) \Big(
            Z_t b_t
            + Z_t^{X} \pi_t \sigma_t \theta_t
            - \tfrac{1}{2} \Gamma_t^{X} 
              \pi_t \sigma_t \sigma_t^{\top} \pi_t^{\top}
            - F_t
            + \pi_t \sigma_t \sigma_t^{\top} \Gamma_t - \tfrac{\eta}{2} Z_t \sigma_t \sigma_t^{\top} Z_t
        \Big) dt\\
        &+ \, R_t \, (1-\eta) \big(Z_t \sigma_t + Z_t^{X} \pi_t \sigma_t \big)\, dW_t + R_t \big((1 + U_t)^{1 - \eta} - 1\big)\, dH_t .
    \end{align*}
    Now, because the terminal condition on $C$ is the reward for the agent, and the agent has a reserved utility, we know that the DDE of $C$ is positive and therefore $C_0 \geq 0$. Now, the utility is a power function, we have $R_0 \geq 0$ and
    \[
        dR_t = (1-\eta) R_{t^-} a_t dt +  R_{t^-} d\tilde{M}_t
    \]
    where $\tilde{M}$ is a $\G$-martingale and
    \[
        a_t =
            Z_t b_t
            + Z_t^{X} \pi_t \sigma_t \theta_t
            + \tfrac{1}{2} \Gamma_t^{X} 
              \pi_t \sigma_t \sigma_t^{\top} \pi_t^{\top}
            - F_t
            + \pi_t \sigma_t \sigma_t^{\top} \Gamma_t - \tfrac{\eta}{2} Z_t \sigma_t \sigma_t^{\top} Z_t
            + \frac{\lambda_t}{1-\eta} \big((1 + U_t)^{1-\eta} - 1\big) 
    \]
    Now, because of the positivity of the utility, we have that, in order for $R$ to be a super-martingale, we need to ensure that $a_t \leq 0$ and and optimality is achieved when the process is equal to $0$. Completing again the square, we have that
    \begin{align*}
        a_t = - \frac{1}{2}(\pi_t - q_t) Q_t (\pi_t - q_t)^{\top}
    \end{align*}
        where $Q_t = \Gamma_t^{X}\sigma_t \sigma_t^{\top}$
    and 
    \[
        q_t = \frac{1}{\Gamma^X_t}(Z^X_t (\sigma_t \sigma_t^{\top})^{-1}b_t + \Gamma_t)
    \]
    which is possible when $F$ has the following form:
    \[
        F = \frac12  \Big(q - \frac{\eta}{2\Gamma^X_t}Z_t^{\top} \Big)^{\top} Q \Big(q - \frac{\eta}{2\Gamma^X_t}Z_t^{\top} \Big) +  Z_t b_t + \frac{\lambda_t}{1-\eta}\Big[ (1 + U_t)^{1 - \eta} -1 \Big] + dist^2_Q(q_t, C)
    \]
    \end{proof}
\end{document}